\documentclass[journal,10pt]{IEEEtran}
 

 
\usepackage{cuted}
\usepackage{comment}
\usepackage{gensymb}
\usepackage{graphicx}
\usepackage{caption}
\usepackage [noadjust]{cite} 
\usepackage{bm}  
\usepackage{amsmath}
\usepackage{amssymb}
\usepackage{enumerate}
\usepackage{stfloats}
\usepackage{cases}
\usepackage{epstopdf}
\usepackage{soul,color} 
\usepackage{hyperref}
\usepackage[percent]{overpic}
\usepackage{tabularx}
\usepackage[table]{xcolor}
\usepackage{multirow}
\usepackage{booktabs}  
\usepackage{tabu} 

\usepackage{algorithm}
\usepackage{algpseudocode}%

\newcommand{\M}{\mathcal{M}}

\newcolumntype{L}[1]{>{\hsize=#1\hsize\raggedright\arraybackslash}X}%
\newcolumntype{R}[1]{>{\hsize=#1\hsize\raggedleft\arraybackslash}X}%
\newcolumntype{C}[1]{>{\hsize=#1\hsize\centering\arraybackslash}X}%

\newcommand{\argmax}{\operatornamewithlimits{argmax}}

\newcommand{\hbold}{\boldsymbol{h}}
\newcommand{\Abold}{\boldsymbol{A}}

\newcommand{\npower}{\sigma^2}
\newcommand{\wbold}{\boldsymbol{w}}

\newcommand{\sbold}{\boldsymbol{s}}
\newcommand{\pbold}{\boldsymbol{p}}
\newcommand{\pibold}{\boldsymbol{\pi}}
\newcommand{\abold}{\boldsymbol{a}}
\newcommand{\thetabold}{\boldsymbol{\theta}}
\newcommand{\rbold}{\boldsymbol{r}}


\newtheorem{Corollary}{Corollary}

\newtheorem{property}{Property}



\begin{document}
	\raggedbottom
	\allowdisplaybreaks

    \title{Near-Field Spot Beamfocusing: A Correlation-Aware Transfer Learning Approach

      \thanks{This research is supported by Business Finland via project 6GBridge - Local 6G (Grant Number: 8002/31/2022), and Research Council of Finland, 6G Flagship Programme (Grant Number: 346208).  The
work of Mehdi Rasti was supported by the Research Council of Finland
Profi6 336449.
    }
}
    \author{Mohammad Amir Fallah, Mehdi Monemi, \textit{Member}, IEEE, Mehdi Rasti, \textit{Senior Member}, IEEE, Matti Latva-aho, \textit{Fellow}, IEEE
 \thanks{
  Mohammad Amir Fallah is with Department of Engineering, Payame Noor University (PNU), Tehran, Iran (email: mfallah@pnu.ac.ir)  
  \\
  Mehdi Monemi,
Mehdi Rasti, and Matti Latva-aho are with Centre
for Wireless Communications (CWC), University of Oulu, 90570 Oulu, Finland.
 (emails: mehdi.monemi@oulu.fi, mehdi.rasti@oulu.fi, matti.latva-aho@oulu.fi).
\\
}}

	\maketitle
\begin{abstract}
Three-dimensional (3D) spot beamfocusing (SBF), in contrast to conventional angular-domain beamforming, concentrates radiating power within a very small volume in both radial and angular domains in the near-field zone.
Recently the implementation of channel-state-information (CSI)-independent machine learning (ML)-based approaches have been developed for effective SBF using extremely large-scale programmable metasurface (ELPMs). These methods involve dividing the ELPMs into subarrays and independently training them with Deep Reinforcement Learning to jointly focus the beam at the desired focal point (DFP).
 This paper explores near-field SBF using ELPMs, addressing challenges associated with lengthy training times resulting from independent training of subarrays.
To achieve a faster CSI-independent solution, inspired by the correlation between the beamfocusing matrices of the subarrays, we leverage transfer learning techniques. First, we introduce a novel similarity criterion based on the phase distribution image (PDI) of subarray apertures. Then we devise a subarray policy propagation scheme that transfers the knowledge from trained to untrained subarrays. We further enhance learning by introducing quasi-liquid layers as a revised version of the adaptive policy reuse technique.  We show through simulations that the proposed scheme improves the training speed about 5 times. Furthermore, for dynamic DFP management, we devised a DFP policy blending process, which augments the convergence rate up to 8-fold.

	\end{abstract}
	\begin{keywords}
	spot beamfocusing, near-field, reinforcement learning, transfer learning,  policy propagation, policy blending, quasi-liquid layer, phase distribution image
	\end{keywords}
	
	
\thispagestyle{empty}

\section{Introduction}

\begin{figure*}[h]
		\centering
		\includegraphics [width=0.9\textwidth]{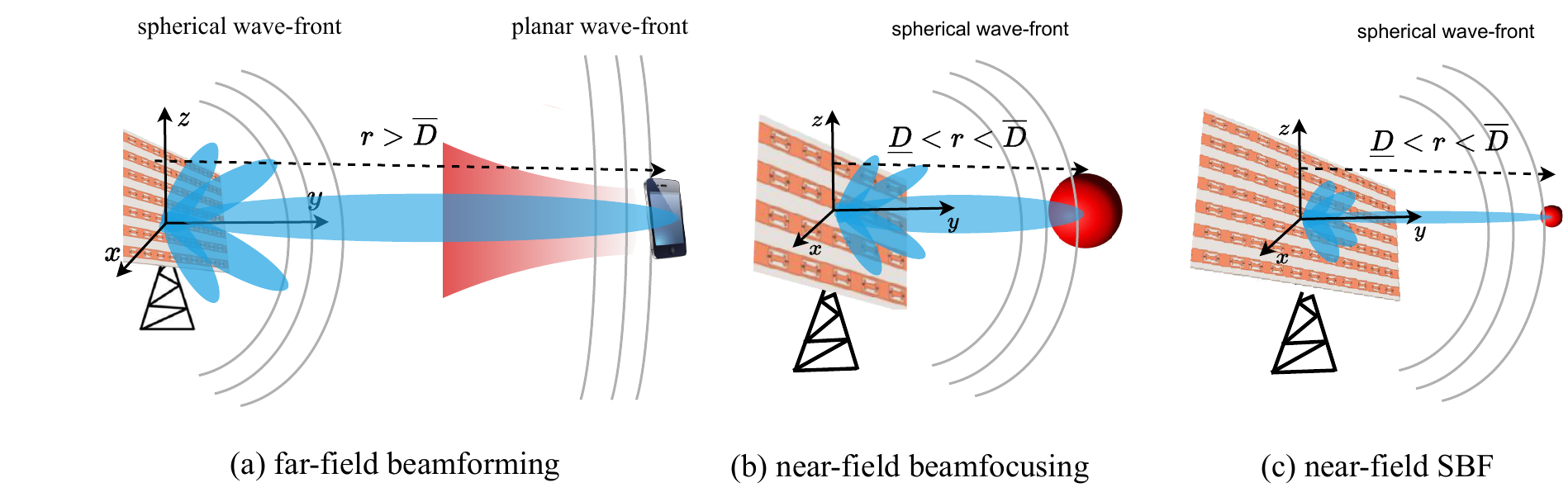} \\
		\caption{Three different types of shaping the beam profile}
		\label{fig:farfield-nearfield-SBF}
\end{figure*}

{\color{black}
Three-dimensional} (3D) beamfocusing is the technique of shaping the radiation beam profile of electromagnetic waves to concentrate the radiating power in a desired region in 3D space. This is different from the well-known far-field {\color{black}two-dimensional} (2D)  beamforming technique, which allows controlling the shape of the radiating beam in only the angular domain but not in the distance domain. 
Beamfocusing enables concentrating the radiating power by shaping the beam across all spatial dimensions, containing both the angular domains of elevation and azimuth angles and the radial distance domain. 

Electromagnetic radiating apertures, encompassing both continuous (e.g., parabolic reflectors) and discrete (e.g., planar arrays), exhibit three distinct radiation regions categorized by their distance from the aperture, as detailed in  \cite{balanis2016antenna}. The region closest to the aperture is the reactive near-field, characterized by the dominance of reactive power over radiating power. 
The furthest region, known as the far-field or Fraunhofer zone, exhibits planar propagation of radiating waves, resulting in a distance-independent radiation pattern, as depicted in  (Fig. \ref{fig:farfield-nearfield-SBF}-a) In this zone, the radiating pattern of the aperture, relative to an isotropic antenna, can only vary within the angular domain. 
The Fresnel zone, situated between the reactive near-field and the far-field zones, is characterized by a spherical wavefront. This inherent property enables beamfocusing (as illustrated in (Fig. \ref{fig:farfield-nearfield-SBF}-b), where manipulation of the phase at distinct aperture points allows concentration of radiating power into a small focal region \cite{goodman2017introduction}.
 The Fresnel zone for a continuous aperture is characterized by the region where the distance from the aperture (denoted by $r$) holds in the range of  $\underline{D} \leq r \leq$ $\overline{D}$ where $D$ is the largest dimension of the aperture,   $\lambda$ is the wavelength, $\overline{D}=2D^2/\lambda$ for the on-boresight scenario and $\underline{D}$ is limited to the space very close to the antenna aperture \cite{selvan2017fraunhofer}. The authors of \cite{10720806,monemi2024revisiting} have shown that for the off-boresight observation point, $\overline{D}$ can be extended to about 4 times that relating to the on-boresight case.

 
Near-field beamfocusing has many applications in various fields of science and technology, such as wireless communications \cite{zhang20236g,an2023toward}, wireless power transfer (WPT) \cite{demarchou2022energy,costanzo2021evolution,clerckx2018fundamentals}, medical and health \cite{khan2020wireless,tofigh2014near}, the semiconductor industry, lithography\cite{magen2021focused}, and microscopy \cite{wu2023interactive}. 
The realization of beamfocusing through conventional tools such as lenses or mirrors is widely used but entails many challenges. Resolution, aberrations, and chromatic dispersion are some of the issues that are caused by design geometry and construction materials \cite{gutierrez2021chromatic,sawant2021aberration,liang2019high}. Additionally, these tools are typically designed for a fixed {\color{black}desired focal point (DFP)}, lacking the ability for dynamic and intelligent DFP adjustment. 
To address the aforementioned limitations and enable beamfocusing with adjustable DFPs, the utilization of intelligent arrays presents a promising solution. This adjustability can be achieved through various intelligent array implementations, including conventional phased array antennas (CPAs), programmable metasurface (PMs), or holographic MIMO surfaces \cite{fu2020terahertz,gong2023holographic,an2023tutorial,yurduseven2017design,cao2020construction,kumar2022comprehensive}.

{\color{black}Spot beamfocusing} is defined by the precise control of the focal region to an exceptionally small size, thereby concentrating the bulk of radiating power into a localized, spot-like area \cite{monemi20246g,monemi2023towards}  (Fig. \ref{fig:farfield-nearfield-SBF}-c).
To achieve SBF, a high ratio of $D/\lambda$ is required \cite{goodman2017introduction}, which can be accomplished through decreasing $\lambda$ as well as increasing $D$. The former is achieved through transitioning to mm-wave/THz frequencies. The latter can be realized by employing extremely large-scale array antennas (ELAAs).
However, implementing SBF with CPAs necessitates bulky and expensive ELAAs. Additionally, holographic MIMO surfaces have limitations in fully adjustable DFP and power transmission \cite{an2023tutorial}. A viable option for SBF implementation is to use programmable metasurfaces, which provide low-cost, mass-producible, and reconfigurable ELAAs. 
Realizing SBF with ELAAs necessitates coherent signal arrival from all elements at the DFP. This necessitates phase shifters for each element, demanding the precise computation of an optimal phase-shift matrix (which is called the beamfocusing matrix in the paper).
{\color{black}
While various channel-state-information (CSI)-independent far-field beam training methods have been proposed for multi-subarray massive MIMO systems, such as the low-complexity scheme exploiting the spatial correlation between ELAA elements in \cite{wu2019low} and the fast multibeam training method for sub-connected hybrid beamforming  \cite{yang2022fast}, the application of these methods to near-field SBF are practically restrictive and hence other techniques should be developed.} The challenges stem from estimating the CSI related to a massive number of array elements and the intricate nature of near-field channel matrices (characterized by spherical wave propagation and non-sparse angular-domain representation) \cite{cui2022channel,yang2023near}. Consequently, recent research has explored machine learning (ML)-based CSI-independent SBF structures, as exemplified in \cite{monemi20246g}.


This work tackles the challenges of SBF implementation by leveraging an extremely large-scale programmable metasurface (ELPMs) and a novel deep reinforcement learning (DRL)-based algorithm. The ELPM array comprises numerous small subarrays, each equipped with a DRL agent. Each DRL agent trains the corresponding subarray independently to direct its beam toward the DFP. Subsequently, an algorithm is employed to realize constructive interference and achieve complete alignment of the subarray beams at the DFP, culminating in SBF. This modular architecture significantly reduces the complexity of the ML-based SBF algorithm.
However, inherent to their iterative nature, DRL-based algorithms necessitate a substantial number of iterations to converge upon the optimal beamfocusing matrix. Illustratively, the proposed DRL schemes in \cite{monemi20246g} require approximately 100k iterations for convergence. 
In practical applications, the optimal beamfocusing matrix for each subarray exhibits a significant degree of correlation with that of many other subarrays, as will be demonstrated in this work. 
Leveraging this inherent interdependency, the knowledge acquired during the training of one subarray can be effectively transferred to expedite the training of others. This transfer learning (TL) approach, as detailed in this paper, significantly reduces the convergence time of the DRL algorithm.
However, even minor variations in the DFP necessitate retraining all ELPM subarrays from scratch. In such scenarios, a similar TL strategy, as described earlier, can be applied to train the entire ELPM for the new DFP location, leveraging knowledge acquired from previous DFP training processes. The aforementioned cases demonstrate the necessity of applying TL techniques for SBF which is fully explored in this paper.



TL is an ML technique, in which the knowledge learned from a task is reused to boost the performance of a different but related task \cite{zhuang2020comprehensive,zhu2023transfer}. 
So far, several researches have been accomplished on the application of TL in far-field beamforming \cite{yuan2020transfer,wang2022deep,luo2023meta}. For example in \cite{yuan2020transfer}, an offline adaptive learning algorithm based on deep TL and meta-learning is proposed to achieve fast downlink beamforming adaptation in wireless communication. 
The authors of  \cite{wang2022deep} have proposed a deep transfer reinforcement learning (DTRL) method for beamforming and resource allocation in multi-cell multiple-input single-output  OFDMA systems. The DTRL method transfers the learned policy from one cell to another cell having different channel conditions and user demands, leading to the reduction of the training time. 
In \cite{luo2023meta},  a meta-critic network for reconfigurable intelligent surface (RIS)-enabled beamforming is suggested to maximize the sum rate of a multi-user system. The meta-critic network recognizes the changes in the environment and automatically updates the learning model. In addition, a stochastic explore and reload procedure is also developed to address the issue of high-dimensional action space.

Prior research has explored the application of TL for beamforming in the far-field region. However, these techniques are not directly applicable to near-field SBF through ELPMs. Currently, there exists no work in the literature considering the benefits of TL in the near-field SBF application. 
The application of TL schemes to ML problems faces some challenges, including how to devise measuring criteria to quantify the similarity or relatedness of the source and target domains or tasks; how to select the best source knowledge for transfer; and how to trade off between exploiting the source knowledge and exploring the target data. In this work, the aforementioned issues have been well addressed for the near-field SBF problem.

{\color{black}\subsection{Motivation and Contributions}
\begin{itemize}

  \item \textbf{TL-Powered Near-Field Beamfocusing:}
This paper pioneers the application of TL techniques to the domain of near-field beamfocusing and SBF. While previous research has focused on TL for far-field beamforming, our work extends its application to the near-field context. By leveraging TL, we aim to significantly accelerate the training process and enhance the efficiency of the DRL-based near-field SBF, making it a more practical and viable solution.

\item \textbf{Novel Correlation-Aware TL Approach:}
A key contribution of this paper is the introduction of a novel similarity metric, based on the \textbf{\textit{phase distribution image}} (PDI) of subarray apertures, wherein the excitation phases of array elements are considered as the pixels of a 2D image. The proposed \textit{metric} builds a correlation assessment between different PDIs, facilitating efficient knowledge transfer from the trained agents to untrained ones. Building upon this correlation assessment, we propose a novel TL-based \textbf{\textit{subarray policy propagation}} scheme, which increases the speed of finding the optimal beamfocusing matrix 4-fold compared to the case where no TL is leveraged, as evidenced through numerical results.

\item  \textbf{Quasi-Liquid Layers:}
 To improve training efficiency, we introduce a revised adaptive policy reuse technique called quasi-liquid layer (QLL). This involves adding extra layers to deep neural networks (DNNs) for actor and critic networks. The inner and middle layers are kept frozen, while the outer layers gradually become liquid. More exactly, the learning rates for the outer layers are adjusted from zero (frozen) to one (liquid). This procedure results in an extra 20\% increase in training speed compared to the subarray policy propagation scheme without QLL.

\item  \textbf{DFP Policy Blending:}
In scenarios where the DFP varies due to the movement of user equipment (UE), it is necessary to adapt the beamfocusing matrix to accommodate the changing propagation environment. Instead of training a new policy from scratch, DFP policy blending leverages the knowledge gained from previously trained policies learned for different DFP locations. The process starts by identifying a set of previously trained policies relating to old DFPs that are highly correlated with those relating to the new DFP. These correlated policies are then blended to create an initial policy for the new DFP. This blended policy significantly increases the training speed by up to 8 times compared to that when no policy blending is applied, as demonstrated  in the numerical results.
\end{itemize}
}


 \subsection{Notations:} Throughout this paper, for any matrix $\Abold$, $\Abold_{m,n}$, $\Abold^{\mathrm{T}}$, $\Abold^*$, and $\Abold^\mathrm{H}$, denote the $(m,n)$-th entry, transpose,
conjugate, and conjugate transpose
respectively. Similarly, for each vector $\abold$, $\abold_{n}$, $\abold^{\mathrm{T}}$, $\abold^*$, $\abold^\mathrm{H}$, $\|\abold\|$
denote the $n$-th entry, transpose,
conjugate, conjugate transpose, and Euclidean norm
respectively. 
Finally, {\it vec} converts a matrix into a vector, $\odot$ is the element-wise Hadamard product and $\boldsymbol{1}_N$ is a $N\times 1$  ones-vector.

\section{System Model and Problem Formulation}
\label{sec:system_model_and_problem_formulation}

\subsection{System Model}
We assume a planar ELPM with $N=N_r\times N_c$ antenna elements arranged as a rectangular phased array aperture of diameter $D$, where each antenna element has a $q$-bit quantized phase shifter to control the phase of the radiating signal. Let $\boldsymbol{r}_{ij}^a$, $\boldsymbol{r}^a$, and $\boldsymbol{r}^\mathrm{DFP}$ be respectively the location points of the antenna element $(i,j)\in\{1,2,..., N_r\}\times\{1,2,..., N_c\}$, the center of the aperture, and the DFP location. The channel gain matrix between the ELPM and DFP is  $\hbold=[h_{ij} g_{ij}^a g^{r}]^{N_r\times N_c}$ where $g_{ij}^a$ is the gain of the transmitter antenna element $n$ (in which the mutual coupling between the antenna elements is considered), $g^{r}$ is the gain of the receiver antenna, and $h_{ij}$ is the corresponding channel gain obtained as follows:
\begin{multline}
    \label{eq:hn}
    h_{ij}=\gamma\|\rbold_{ij}^a-\rbold^\mathrm{DFP} \|^{-\frac{\alpha}{2}}e^{-j\left( \frac{2\pi}{\lambda} \|\rbold_{ij}^a-\rbold^\mathrm{DFP} \| + \Delta\theta_{n0}\right) }
    +
    \\
    \gamma\sum_{l=1}^{L} 
    \left[
    \beta_{ijl} \left(d_{ijl}\right)^{-\frac{\alpha}{2}} e^{-j\left(\frac{2\pi}{\lambda}d_{ijl}+\Delta \theta_{ijl}\right)}
    \right].
\end{multline}
 where  $\gamma$ is the attenuation coefficient, $\alpha$ is the path-loss exponent, $d_{ijl}$ is the total path length of the propagated signal from the $l$'th path of antenna element $(i,j)$ toward the DFP,
$\beta_{ijl}$ is the corresponding signal attenuation coefficient due to the reflection and finally, $\Delta \theta_{ijl}$ models the corresponding phase shift initiated from the reflecting surfaces, as well as the phase mismatch due to hardware impairments.
 

Let $\wbold=[w_{ij}]_{N_r \times N_c}$ be an $N$ element \textit{beamfocusing matrix} where $w_{ij}$ is the coefficient corresponding to the antenna element in the $i$'th row and $j$'th column of the array. Each element  $w_{ij}$  is in the form of $\frac{1}{\sqrt{N}} e^{j\phi}$ where $\phi\in\phi^{\mathrm{quan}}$ is the phase-shift coefficient, in which $\phi^{\mathrm{quan}}$ is the set of $2^q$ valid phases obtained from the $q$-bit quantized phase shifter. Without loss of generality, we consider that $\wbold$ is composed of $M=M_r\times M_c$ subarrays where the set of subarrays is denoted by $\M=\{1,2,...,M\}$ and each subarray $m\in\M$ has $N'=N'_r\times N'_c$ elements. Therefore, $\wbold$ is expressed as follows:

      \begin{align}
      \label{eq:BM}
         \wbold= 
         \begin{bmatrix}
            \wbold^{(1)}  &  \wbold^{(2)}  & ...  & \wbold^{(M_c)} 
            \\            \wbold^{(M_c+1)}  &  \wbold^{(M_c+2)}  & ...  & \wbold^{(2M_c)}
            \\
            \vdots  & \vdots  & \ddots & \vdots
            \\
            \wbold^{(M_c(M_r-1)+1)}  & \wbold^{(M_c(M_r-1)+2)}   & ...  & \wbold^{(M_c M_r)}
        \end{bmatrix}
     \end{align}
where  $\wbold^{(m)}=[w^{(m)}_{ij}]_{N'_r \times N'_c}$ is the \textit{beamfocusing submatrix} for subarray $m$. By letting $s$ be the transmit signal, and $\nu$ be the additive Gausian noise, the received signal at DFP is obtained as   
   \begin{align}
x =\boldsymbol{1}^\mathrm{T}_{N_r} \left( \wbold^*\odot   \hbold \right) \boldsymbol{1}_{N_c} s+\nu
\end{align}
   
   For any given beamforming matrix $\wbold$, the received power at DFP location $\rbold^\mathrm{DFP}$ is obtained as follows: 
\begin{align}
     \label{eq:power_def}
     &  p(\wbold,\rbold^\mathrm{DFP})
    =\mathbb{E} [ x x^*]= \notag
     \\
     & \hspace{30pt}
     \boldsymbol{1}^T_{N_r} \left( \wbold^*\odot   \hbold \right)  \odot \left( \wbold\odot   \hbold^* \right)^T  \boldsymbol{1}_{N_c}{E}[s s^*] +
     \notag
     \\
     &\boldsymbol{1}^T_{N_r} \left( \wbold^*\odot   \hbold \right) \boldsymbol{1}_{N_c}{E}[s \nu^*] +
     \boldsymbol{1}^T_{N_r} \left( \hbold^*\odot   \wbold \right) \boldsymbol{1}_{N_c} {E}[s^* \nu] + \npower_{\nu}. 
 \end{align}
 in which $\npower_{\nu}$ is the noise power.

\subsection{Problem Formulation}
To formally define the SBF problem, first, we introduce the concept of beamfocusing radius (\textbf{BFR}) as presented in \cite{monemi20246g}.
  Given any beamfocusing matrix $\wbold$, DFP locator $\rbold^\mathrm{DFP}$, and some constant $0<\eta\leq 1$, the BFR denoted by $R(\wbold,\rbold^\mathrm{DFP},\eta)$ is defined as the radius of the circle $S_R$ located on a reference plane $S$ centered at DFP that contains a fraction  $\eta$ of the total radiating power passing through the reference plane $S$ . For instance, for $\eta=0.9$, it can be assumed that 90\% of the total radiating energy is confined within a circle on the reference plane with radius $R$ and center DFP.
{\color{black}$R$  is formally obtained by finding a value of  $S_R$  for which the following  equality holds \cite{monemi20246g}: 
    \begin{align}
    \label{eq:R}
        \int_{S_R} \partial \pbold(\wbold,\rbold').\widehat{\boldsymbol{a}}_nd s'
            = \eta 
            \int_{S} \partial \pbold(\wbold,\rbold').\widehat{\boldsymbol{a}}_nd s'=\eta P^T
    \end{align}
where $\partial\pbold(\wbold,\rbold')$ is the power density function corresponding to $\wbold$ the location point $\rbold'$, and $\widehat{\boldsymbol{a}}_n$ is the unit vector normal to the reference plane (focal plane), and $P^T$ is the total power passing from the focal plane.}
Now consider a sphere of radius  $R$ centered at the DFP. It is demonstrated that when the near-field effect becomes significant, the power level decreases outside the sphere in all directions, even when approaching the aperture. This suggests that the focal region in the near-field can be approximated by a sphere centered at DFP.

Based on the definition of BFR, the SBF problem aims at finding the beamfocusing matrix $\wbold$ corresponding to the minimum BFR, formally expressed as follows
 \begin{subequations}
\label{eq:opt1}
    \begin{align}
	\min_{\wbold} & \ R(\wbold,\rbold^\mathrm{DFP},\eta) 
        \\
        \label{eq:opt1_const1}
        \mathrm{subject\ to:} 
        & \ \wbold \in \mathcal{W} 
        \\
        \label{eq:opt1_phys1}
        \mathrm{phy.\ const.:}
        & \ \lVert \boldsymbol{r}^a-\boldsymbol{r}^\mathrm{DFP}\rVert\in [\underline{D},\overline{D}]
        \\
        \label{eq:opt1_phys2}
        & \  R(\wbold,\rbold^\mathrm{DFP},\eta) < BFR^{\mathrm{th}}
    \end{align}
\end{subequations}
where $\underline{D}$ and $\overline{D}$ are the Fresnel zone limits, $\mathcal{W}$ is the domain of $\wbold$ which is of the cardinality of $2^{r\times N}$, and $\lVert \boldsymbol{r}^a- \boldsymbol{r}^\mathrm{DFP}\rVert$ is the distance between the center of the aperture and DFP. 
The physical constraint \eqref{eq:opt1_phys1} guarantees that the DFP is located in the Fresnel region, and the one in \eqref{eq:opt1_phys2} ensures that the BFR is lower than a small threshold value $BFR^{\mathrm{th}}$ which is determined by the minimum size of a focal region to be regarded as a spot. To satisfy the physical constraint  \eqref{eq:opt1_phys2}, a large number of array elements is needed. This requires $N$ to be higher than a minimum number of elements $N^{\mathrm{th}}$ where $N^{\mathrm{th}}$ is a function of $BFR^{\mathrm{th}}$; the lower the $BFR^{\mathrm{th}}$, the higher the number of array elements needed.

The following property is used for proving the next Theorem.
\begin{property}
   
The normalized array responses for any beamforming matrix $\wbold$ and any given two points $\rbold_1 \neq \rbold_2$ denoted by $\abold(\rbold_1,\wbold)$ and $\abold(\rbold_2,\wbold)$ have asymptotic orthogonality in the near-field if the number of antennas $N$ is sufficiently large. This property is mathematically expressed as follows \cite{liu2023near}:
\begin{align}
    \label{eq:asymp}
    \lim_{N\rightarrow \infty} \frac{1}{N} |\abold^{\mathrm{T}} (\rbold_1,\wbold) \abold^*(\rbold_2,\wbold)|=0
\end{align}
where $\abold(\rbold,\wbold)=\mathrm{vec} (\wbold^* \odot \boldsymbol{h}) $.
\end{property}

\begin{Corollary} 
    \label{asymp_theorem}

In the asymptotic case where $N$ is sufficiently large, the BFR minimization problem \eqref{eq:opt1}  is equivalent to maximizing the focused power at the DFP, corresponding to the following optimization problem:
 \begin{subequations}
\label{eq:opt_power}
    \begin{align}
	\max_{\wbold} & \ p(\wbold,\rbold^\mathrm{DFP}) 
        \\
        \mathrm{subject\ to:} 
        & \eqref{eq:opt1_const1}
        \\
        \mathrm{phys. \ const.:} &\eqref{eq:opt1_phys1},
        \eqref{eq:opt1_phys2}
    \end{align}
\end{subequations}
{\color{black} where $p$ is formulated in \eqref{eq:power_def}.}
\end{Corollary}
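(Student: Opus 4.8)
The plan is to show that the two programs \eqref{eq:opt1} and \eqref{eq:opt_power} admit the same optimal beamfocusing matrix by arguing that, in the large-$N$ regime, shrinking the focal radius $R(\wbold,\rbold^\mathrm{DFP},\eta)$ and maximizing the received power $p(\wbold,\rbold^\mathrm{DFP})$ are driven by one and the same mechanism: coherent combining of the element contributions at the DFP together with the asymptotic decorrelation of the array response everywhere else. First I would evaluate the power expression \eqref{eq:power_def} at an arbitrary observation point $\rbold$ on the reference plane $S$, writing $\hbold(\rbold)$ for the channel to that point. Under the standard assumption that the transmit signal and the noise are uncorrelated, the two signal--noise cross terms vanish and $p(\wbold,\rbold)=\lvert\boldsymbol{1}^{\mathrm T}_{N_r}(\wbold^*\odot\hbold(\rbold))\boldsymbol{1}_{N_c}\rvert^2\,\mathbb{E}[ss^*]+\npower_{\nu}=\lvert\boldsymbol{1}_N^{\mathrm T}\abold(\rbold,\wbold)\rvert^2\,\mathbb{E}[ss^*]+\npower_{\nu}$, where $\abold(\rbold,\wbold)=\mathrm{vec}(\wbold^*\odot\hbold(\rbold))$ is the array response at $\rbold$. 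Hence the entire spatial power profile over $S$ is controlled by the coherent sum $\boldsymbol{1}_N^{\mathrm T}\abold(\rbold,\wbold)$ of the array-response entries.

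Next I would characterize the power-maximizing matrix. Because each $w_{ij}=\tfrac{1}{\sqrt N}e^{j\phi}$ has fixed modulus, $p(\wbold,\rbold^\mathrm{DFP})$ is maximized by co-phasing every product $(w_{ij})^*h_{ij}(\rbold^\mathrm{DFP})$, so that the DFP beamforming gain $\lvert\boldsymbol{1}_N^{\mathrm T}\abold(\rbold^\mathrm{DFP},\wboldopt)\rvert^2$ grows as $\Theta(N)$. Evaluating the same $\wboldopt$ at any fixed $\rbold\neq\rbold^\mathrm{DFP}$, the coherent sum $\boldsymbol{1}_N^{\mathrm T}\abold(\rbold,\wboldopt)$ equals---up to the nearly constant element magnitudes across the near-field aperture---the steering-vector cross-correlation $\abold^{\mathrm T}(\rbold,\wbold)\abold^*(\rbold^\mathrm{DFP},\wbold)$, which by Property~1 is asymptotically negligible relative to the $\Theta(N)$ self-correlation. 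Consequently $p(\wboldopt,\rbold)/p(\wboldopt,\rbold^\mathrm{DFP})\to 0$ pointwise for every $\rbold\neq\rbold^\mathrm{DFP}$, the normalized power distribution on $S$ collapses to a spike at the DFP, and for any fixed $\eta<1$ the enclosing radius $R$ shrinks to the minimal value set by the coherent main lobe. This establishes the forward direction: the power-maximizer attains the minimal BFR.

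For the converse I would argue through the monotone relationship between focal radius and concentration: the fraction of planar power lying within radius $R$ of the DFP is a non-decreasing function of the normalized peak $p(\wbold,\rbold^\mathrm{DFP})/\!\int_S p(\wbold,\rbold)\,\mathrm{d}S$, and by the asymptotic orthogonality this normalized peak is maximized precisely when the element contributions are co-phased at the DFP, i.e. exactly at $\wboldopt$. Taking the contrapositive, any $\wbold$ that fails to maximize the DFP power leaves a non-vanishing residual lobe at some $\rbold\neq\rbold^\mathrm{DFP}$, which inflates the radius required to enclose the fraction $\eta$; hence the BFR-minimizer must coincide with the power-maximizer, yielding the claimed equivalence.

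The main obstacle I anticipate is the passage, in the forward direction, from pointwise power decay to the integrated statement defining the BFR: Property~1 (via \eqref{eq:asymp}) gives vanishing correlation for each fixed pair of points, whereas $R(\wbold,\rbold^\mathrm{DFP},\eta)$ is defined through an integral over the continuum of observation points on $S$, so a uniform (or dominated-convergence type) control over $\rbold$ is needed to guarantee that the \emph{total} out-of-spot power fraction vanishes rather than merely its value at each individual point. Making the ``nearly constant magnitude'' reduction to \eqref{eq:asymp} rigorous, keeping track of the exact normalization of the array response, and handling the near-aperture behavior noted in the BFR definition (where the spherical argument replaces the planar one) are the technical points I would treat most carefully.
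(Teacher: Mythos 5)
Your proposal follows the same core route as the paper's proof: invoke Property~1's asymptotic orthogonality to argue that the power-maximizing matrix $\wboldopt$ produces negligible power at every $\rbold\neq\rbold^\mathrm{DFP}$, so the focal region collapses and the BFR becomes minimal. The difference is one of completeness rather than of method. The paper's proof consists only of your ``forward direction'' and stops there: it asserts that orthogonality makes every off-DFP point fall outside the BFR, hence the power maximizer yields a very small BFR, and it treats this as establishing the equivalence of \eqref{eq:opt1} and \eqref{eq:opt_power}. You go further on three points the paper leaves implicit: (i) you make explicit how the spatial power profile is governed by the coherent sum $\boldsymbol{1}_N^{\mathrm T}\abold(\rbold,\wbold)$, connecting \eqref{eq:power_def} to the array response in Property~1; (ii) you supply a converse argument (a non-power-maximizing $\wbold$ leaves a residual lobe that inflates the radius enclosing the fraction $\eta$), which is genuinely needed for ``equivalence'' of the two programs but is absent from the paper; and (iii) you correctly identify that Property~1 is a pointwise statement while the BFR is defined through an integrated power fraction over the reference plane, so some uniformity or dominated-convergence control is required---a gap the paper's proof silently steps over. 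Your converse step is itself stated somewhat informally (the claim that failing to maximize DFP power necessarily leaves a non-vanishing lobe elsewhere deserves justification, e.g.\ via conservation of the radiated power across the plane), but it is at least as rigorous as anything in the paper's own argument, so your proposal is sound at the paper's level of rigor and in fact more complete.
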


\begin{proof}
From Property 1, in the asymptotic case where $N$ is sufficiently large, for the point $\rbold_1=\rbold^{\mathrm{DFP}}$, the array response $\abold(\wbold,\rbold_1)$ corresponding to maximum power at $\rbold^{\mathrm{DFP}}$ is orthogonal to the array response for any point $\rbold_2 \neq \rbold^{\mathrm{DFP}}$, leading to negligible received power at point $\rbold_2$. This means that the point $\rbold_2\neq \rbold^{\mathrm{DFP}}$ resides outside the BFR, which results in a very small BFR in the asymptotic case corresponding to the BFR minimization problem expressed in \eqref{eq:opt1}.
\end{proof}



It should be noted that solving \eqref{eq:opt_power} requires the exact CSI of all ELPM elements. Current schemes for the estimation of CSI in massive MIMO and ELPM systems in the conventional far-field propagation scenario rely basically on the assumption of sparse representation of the propagation channels \cite{zheng2022survey,cui2022channel}.


{\color{black}In the classical far-field angular-domain model, near-field channels might not appear sparse due to the spherical wavefront \cite{cui2022channel}. Recent research has shown that by transitioning to a polar-domain or a distance-parameterized angular-domain representation \cite{cui2022channel,zhang2023near}, sparse representations of near-field channels can be available. However, the utilization of this sparsity assumption is challenging for near-field SBF applications leveraging ELPMs due to the following reasons: 
(a)-{\it High-cardinality codebook space:} The works presented in \cite{cui2022channel,zhang2023near} employ a 2D distance-angular representation for a 1D linear antenna array, partitioning the angular-domain only for the elevation angle ($\theta$). In contrast, 3D SBF requires ELAAs with 2D planar antenna array comprising thousands of elements.
This necessitates a high-dimensional codebook space. 
    While incorporating distance-domain partitioning can potentially result in a sparse channel representation, it adds another dimension to the channel partitioning model leading to an extremely large cardinality of the codebook space.
(b)-{\it Impact of multipath components:} The validity of the sparsity assumption for the distance-angular channel representation proposed in  \cite{cui2022channel,zhang2023near} is contingent upon the number of multipath components. As this parameter increases, the sparsity of the channel representation gradually diminishes.}

Considering the aforementioned issues, the channel estimation for ELPM  in the near-field scenario is a challenging issue. Therefore, a pure blind CSI approach is proposed to solve equation  \eqref{eq:opt_power}. The investigation of this problem has been dealt with in \cite{monemi20246g} using an ML scheme; the idea is to divide the array into a set of subarrays, where the policy of each subarray $m$  is independently trained through a DRL optimizer agent to find optimal beamfocusing submatrix {\color{black}$\widehat{\wbold}^{(m)}$}, and then the optimal beamfocusing matrix {\color{black}$\widehat{\wbold}$} is constructed through some beam alignment mechanism. While possessing attractive benefits, the practical applicability of the described approach is limited by the slow convergence observed in the DRL algorithm. To address this limitation, we propose leveraging the inherent structural similarity existing in beamfocusing matrix/submatrices as elaborated later on. 
Consider the case where $\M^{+}$ policies, each associated with a distinct subarray, are trained independently. Due to the similarity of subarrays beamfocusing submatrices,
it is conceivable that the knowledge acquired from the trained policies $\M^{+}$ could be transferred to expedite the training of remaining $\M^{-}$ subarray policies. This approach has the potential to significantly improve overall training speed. Furthermore, consider a set $\boldsymbol{\mathcal{D}}^{+}$ encompassing all DFPs for which the ELPM has been previously trained. By exploring whether the policies associated with DFPs in $\boldsymbol{\mathcal{D}}^{+}$ can be utilized to train the ELPM for a new DFP, we aim to achieve faster convergence for dynamic DFP management. These two problems are formally defined in the following.


\subsubsection{Subarray Training Propagation}
The first part focuses on exploring how previously trained subarrays can facilitate the training of newly introduced subarrays within the ELPM. This concept is formally defined as follows.

\textbf{Problem 1 (P1)}:
For a given DFP, let {\color{black}$\widehat{\wbold}$} be the solution of the SBF problem \eqref{eq:opt_power}, and suppose $\M=\M^{+} \cup \M^{-}$, where $\M^{+}$ is the set of subarrays whose corresponding beamfocusing submatrices have been obtained (through ML schemes), and  $\M^{-}$ is the set of subarrays whose optimal beamfocusing submatrices are unknown. 
We aim at calculating the overall beamfocusing matrix {\color{black}$\widehat{\wbold}$} by finding {\color{black}$\widehat{\wbold}^{(m')}$} $, \forall m'\in\M^{-}$ through the incorporation of the known submatrices {\color{black}$\widehat{\wbold}^{(m)}$} $, \forall m\in\M^{+}$.

\subsubsection{Dynamic DFP Management}
The second part delves into the challenge of calculating the beamfocusing matrix for a new DFP. We examine how the policies trained for previously encountered DFPs can contribute to the learning process of the policy aimed at the new DFP. This exploration can be formally expressed as follows.

\textbf{Problem 2 (P2)}: Let $\boldsymbol{\mathcal{D}}^{+}$ be the set of DFPs for which the corresponding beamfocusing matrices are available (i.e., trained through ML and stored), and let $\mathcal{D}^{-}$ be a new DFP for which the beamfocusing matrix is to be calculated. We aim to find the optimal beamfocusing matrix for $\mathcal{D}^{-}$ by incorporating the ones relating to $\boldsymbol{\mathcal{D}}^{+}$.

\section{Reviewing Existing DRL-Based SBF Solution}
In what follows, first, we describe briefly the recent background DRL-based solution approach to problem \eqref{eq:opt_power} \cite{monemi20246g} whose basic concepts are also required in our work, and then we highlight the challenges of the existing solution scheme. In the subsequent section, we present new TL-based solution approaches for problems \textbf{P1} and \textbf{P2}, which overcome the drawbacks of the existing solution.

\subsection{A Review on the Distributed DRL-Based SBF Solution Approach \cite{monemi20246g}}

Consider a planar ELPM. We aim to find the optimal beamfocusing matrix {\color{black}$\widehat{\wbold}$}$\in\mathcal{W}$ to achieve maximum measured power in \eqref{eq:opt_power} (corresponding to minimum BFR in \eqref{eq:opt1})  with no prior knowledge of the ELPM elements' CSI. In the subsequent, we briefly describe the distributed DRL-based solution approach explored very recently in \cite{monemi20246g}. 
The CSI-independent SBF solution scheme proposed in \cite{monemi20246g} is based on employing a revised version of the Twin Delayed deep deterministic policy gradient (TD3) DRL-based optimizer which gets an action $\wbold^{(n)}$ for each time-step $n$, and incorporates the measured feedback power $p(\wbold^{(n)},\rbold^\mathrm{DFP})$ to generate the next time-step action $\wbold^{(n+1)}$ until the beamfocusing matrix converges to an optimal value {\color{black}$\widehat{\wbold}$}.
Finding the optimal beamfocusing matrix needs searching in an extremely large action-space domain. For example, a $60\times 60$ ELPM with 4-bit phase-shifters for each antenna element requires an action-space $\mathcal{W}$ of cardinality $16^{3600}$, which is extremely large and cannot be handled directly by conventional ML methods.  To address this drawback,  the array is divided into a set of subarrays $\M$, where each subarray $m\in\M$ is assigned with a beamfocusing submatrix  $\wbold^{(m)}$ as seen in \eqref{eq:BM}. 
In this regard, instead of directly finding the beamfocusing matrix {\color{black}$\widehat{\wbold}$}  corresponding to the maximum power $p(\widehat{\wbold},\rbold^\mathrm{DFP})$ in \eqref{eq:opt_power}, this problem is split into a set of  $M$ subarray-level optimization subproblems each solved independently. From this perspective, each subarray module is equipped with a DRL agent to find the optimal beamfocusing submatrix {\color{black}$\widehat{\wbold}^{(m)}$}corresponding to maximum received power $p_m(\Hat\wbold^{(m)},\rbold^\mathrm{DFP})$. Finally, the optimal beamfocusing submatrices $\widehat{\wbold}^{(m)},\forall m$ are aligned together through a beam alignment mechanism to form the overall SBF beamfocusing matrix {\color{black}$\widehat{\wbold}$}. 
The components of the SBF system relating to \cite{monemi20246g} are shown as the blue boxes depicted in Fig \ref{fig:structure_hardware}. 
The DRL agent of each subarray employs a TD3 algorithm to obtain the corresponding optimal beamfocusing submatrix.
TD3 is a novel model-free, off-policy DRL method designed based on the actor-critic model \cite{fujimoto2018addressing}. 
The main components of the DRL agent for the first subarray (DRL agent 1) are illustrated in Fig \ref{fig:structure_hardware}. The agents interact with the \textit{environment} and consist of \textit{actor} and \textit{critic} components as explained in the following.

\subsubsection{Environment} For each training step $n$, the environment receives the \textbf{action} $\boldsymbol{a}^{(n)}$ from the actor network. Any action taken, results in some specific observation as a \textbf{state} vector denoted by $\boldsymbol{s}^{(n)}$, and a scalar \textbf{reward} denoted by $r^{(n)}$. Here, the action is the assigned beamfocusing submatrix for subarray $m$ denoted by $\boldsymbol{a}^{(n)}=\wbold^{(m,n)}\in\mathcal{W}^{'}$. Besides, it is considered that the state for each time step $n$ is the action at the previous time step, therefore, $\boldsymbol{s}^{(n+1)}=\boldsymbol{a}^{(n)}$. The objective is to find a beamfocusing submatrix leading to the highest power value at $\rbold^\mathrm{DFP}$. For each time step $n$, the environment measures the power $p_m(\wbold^{(m,n)},\rbold^\mathrm{DFP})$ and assigns a reward as follows:

\begin{numcases}{r^{(n)}=}
\label{eq:reward}
+1, & if \resizebox{0.6\columnwidth}{!}{$p_m(\wbold^{(m,n)},\rbold^\mathrm{DFP})>p_m(\wbold^{(m,n-1)},\rbold^\mathrm{DFP})$}
\\
-1, & otherwise 
\nonumber
\end{numcases}

\subsubsection{Actor} 
The actor is a DNN with network parameters $\boldsymbol{\theta}^a$ whose input is the environment state vector $\boldsymbol{s}^{(n)}\in\mathcal{S}$, and its output is the action vector  $\mathbf{a}^{(n)}\in\mathcal{A}$ using the policy $\boldsymbol{\pi}^a(.|\thetabold^a): \mathcal{S} \rightarrow \mathcal{A}
$.
For each subarray $m$ at each iteration $n$, the input of the actor network is simply a vector containing beamfocusing submatrix elements at the corresponding iteration; therefore $\mathcal{A}=\mathcal{W}^{'}$. The actor network in a standard TD3 structure has a continuous action space; therefore a quantizer is added to the actor in order to discretize all phase elements, mapping the action phase domain from $[0,2\pi]^{N'}$ to a discrete domain corresponding to $\mathcal{W}'$.

\subsubsection{Critic}
TD3 structure is composed of two DNNs with parameters $\boldsymbol{\theta}^{Q_i},i\in\{1,2\}$, each estimating a corresponding $Q$ value. The $Q$ value is a meter used in Deep $Q$ Learning schemes that estimates how good an action is.
In the standard TD3 network, both critic DNNs take the concatenation of the state and action as the input, and generate the corresponding Q value as the output, i.e.,
\begin{align}
    \label{eq:critic_main}
    Q_i(., .|\boldsymbol{\theta}^{^{Q_i}}): \mathcal{S} \times \mathcal{A} \rightarrow \mathbb{R}, \ \forall i\in\{1,2\}
\end{align}
The overall $Q$ value is then calculated as $Q=\min{(Q_1,Q_2)}$.

\subsubsection{Agent} The agent is responsible for training the actor and critic DNNs and controlling the trend of the learning process and convergence of the training scheme. In TD3, in addition to the actor and critic DNNs, there exists a \textit{target actor} DNN denoted by $\pibold^{a,t}(.|\thetabold^{a,t})$ and two \textit{target critic} DNNs denoted by $Q_i^t(.,.|\thetabold^{Q_i,t}), \forall i\in\{1,2\}$. 
At each time step $n$, using the current observation state $\sbold^{(n)}$, the current action is selected as $\abold^{(n)}=\pibold^a(\sbold^{(n)}|\thetabold^a)+\nu^a$ where $\nu^a$ is a stochastic exploration noise which is obtained based on the noise model. Similar to the standard deep deterministic policy gradient (DDPG) model, in the conventional TD3 problem, the experience $\left(\sbold^{(n)},\abold^{(n)},r^{(n)},\sbold^{(n+1)}\right)$ is then stored and added to the experience buffer. Due to the discrete nature of the action space here, a quantization 
is added to the standard TD3 problem to convert the continuous action into a valid discrete value. By selecting a random minibatch of size $K^{mini}$ and based on the Q-learning principle, the parameters of the critic DNNs are updated at each time step $n$ through minimizing the mean square error loss function, and for each $T_1$ step, the actor parameters are updated using the sampled policy gradient to maximize the expected discounted reward. Finally, in every $T_2$ step (where $T_2>T_1$), the target actor and target critic DNNs' parameters are updated as follows:
\begin{subequations}
\label{eq:target_update}
\begin{align}
    \thetabold^{a,t}&\leftarrow\tau \thetabold^a + (1-\tau)\thetabold^{a,t}  
    \\
    \thetabold^{Q_i^t}&\leftarrow\tau \thetabold^{Q_i} + (1-\tau)\thetabold^{Q_i^t}, \ \forall i\in\{1,2\}  
\end{align}
\end{subequations}
where $\tau$ is the target smoothing factor which is a small positive value.

\subsection{Limitations and Drawbacks of the Background SBF Scheme}


 \begin{figure*}[t]
	\centering
		\includegraphics [width=512pt]{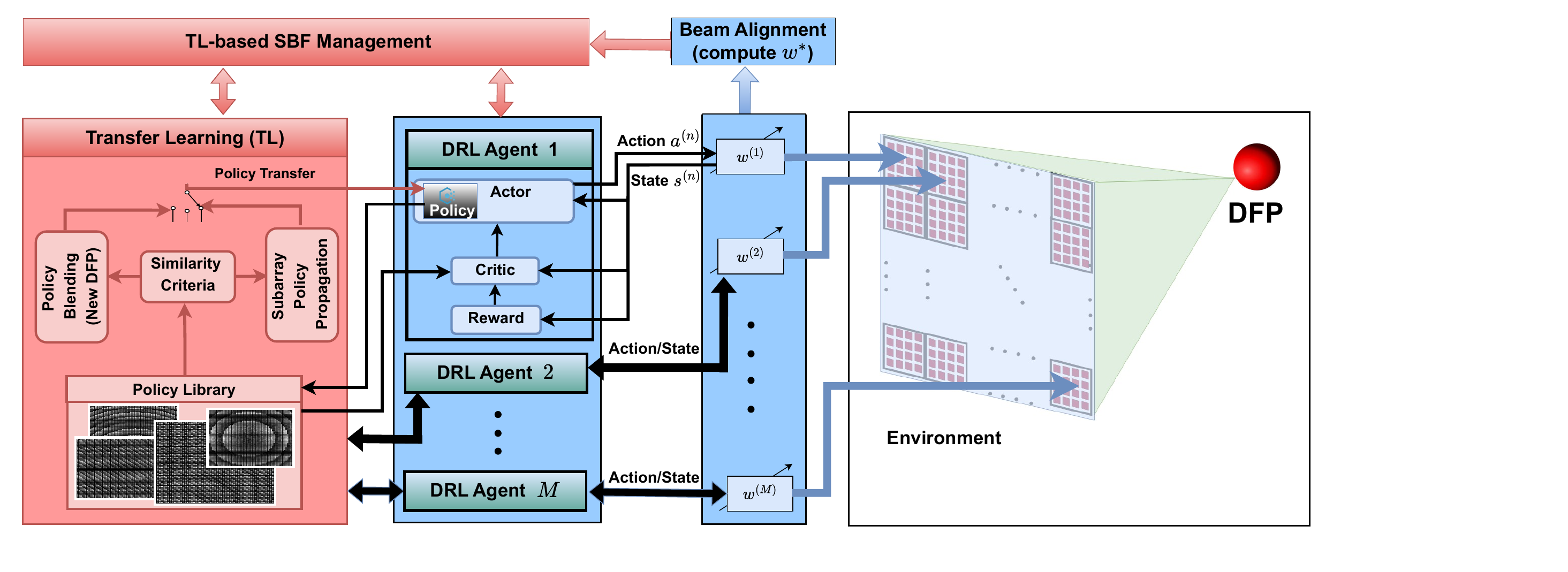} \\
		\caption{Block diagram of the proposed TL-based  SBF system.
		} 
		\label{fig:structure_hardware}
\end{figure*} 

While the distributed DRL-based approach presented in \cite{monemi20246g} remains the only proposed CSI-independent and scalable near-field SBF method for ELPMs to date, it suffers from limitations inherent to its reliance on iterative algorithms. As the study itself acknowledges, the convergence of the DRL algorithm necessitates approximately 100k iterations, posing significant challenges for practical implementation. Additionally, even minor adjustments to the DFP necessitate the complete retraining of all subarrays, rendering prior computational efforts irrelevant. This severely hinders the method's applicability in dynamic environments. One of the main reasons for this inefficiency emanates from lacking the knowledge transfer between the agents of different subarrays. 
Each subarray agent operates independently, failing to integrate prior policies or existing data from previously trained subarrays. This approach is replicated for distinct DFPs, resulting in repetitive learning from scratch without capitalizing on previously acquired knowledge. This limitation impedes the overall efficiency of the proposed DRL algorithm. Notably, as demonstrated in the subsequent section, the outputs of the optimization agents (i.e., the beamforming submatrices) exhibit significant similarities in numerous cases. These observed similarities motivate the exploration of TL schemes to leverage the knowledge of trained agents, thereby accelerating the overall learning process


\section{Transfer Learning Solution Approach}
In this section, first, we introduce TL schemes applicable to the SBF problems  \textbf{P1} and \textbf{P2}, and then characterize the requirements for applying TL schemes to the stated problems.

\subsection{Transfer Learning: How It Contributes to  the SBF Solution}
To apply TL to the DRL-based SBF solution scheme, it is important to identify the relevant knowledge to transfer as well as to choose the optimal transfer technique that is well suited to our problem. In the stated SBF problem, the source and target tasks are finding the beamfocusing submatrices, and thus, both tasks have the same state and action spaces, equal to the $\mathcal{W'}$. Additionally, the reward functions for both source and target tasks are identical. Considering these factors, two general methods for choosing the type of transfer learning technique are suggested to solve problems \textbf{P1} and \textbf{P2}.

\subsubsection{Adaptive Policy Reuse}
The conventional policy reuse is applied to the case when the source and target tasks are identical  \cite{joshi2021adaptive}. In practice, however, the source and target tasks are not quite the same in many applications. Adaptive policy reuse is an intelligent version of conventional policy reuse, where the source policy is transferred to the target domain and updated according to the target task’s dynamics. 
This can improve the performance and efficiency of the learning process in the target task by leveraging the knowledge gained from both tasks, while also allowing the agent to learn from its own experience and adjust its behavior accordingly.  This technique is leveraged for solving \textbf{P1}.

\subsubsection{Policy Blending} 
Policy blending is a technique that combines multiple policies learned from different tasks and produces a new policy that can perform well on a new task. Policy blending can be useful when there exists a set of source tasks that are somehow related to the target task. We employ this technique in solving \textbf{P2}.

In this work, we have leveraged both techniques as shown through red boxes depicted in Fig. \ref{fig:structure_hardware}. Further details on these techniques will be provided in the following subsection and throughout Section \ref{sec:TLSolution}. 
Across both proposed policy transfer techniques, the selection of the most suitable source policy hinges upon identifying the highest degree of similarity between the source and target policies. 
If the similarity exists to at least some minimum desired threshold, the source policy can be transferred to the target domain with some modification or fine-tuning for optimal adaptation. This is elaborated in the following subsection.

\subsection{ Similarity Criterion Based on the Concept of Phase Distribution Image}
The distributed DRL-based algorithm in \cite{monemi20246g} aims to find the optimal 2D  phase distribution on the ELPM aperture (optimal beamfocusing matrix {\color{black}$\widehat{\wbold}$}), which can be likened to a 2D grayscale image where each pixel has $2^q$ colors. Each pixel of this image corresponds to an antenna element of the ELPM, and each color corresponds to a quantized phase shift value. In the proposed distributed model, the beamfocusing matrix of ELPM can be considered as a phase distribution image (\textbf{PDI}), which is divided into a set of sub-images (subarray PDIs), where, the SBF algorithm finds the optimal PDI of each subarray module (corresponding to the optimal beamfocusing submatrix $\widehat{\wbold}^{(m)},\forall m\in\M$), to form the whole optimal matrix {\color{black}$\widehat{\wbold}$}.
Since \textit{PDI} is simply interpreting the \textit{beamfocusing matrix/submatrix} as an image, from now on, we may use these terms interchangeably.
For a given DFP, the locus of elements within the ELPM that receive a coherent signal from the DFP forms a series of co-focal ellipses/circles  \cite{guo2002fresnel}, as demonstrated in Fig. \ref{fig:multi_PDI}. This implies that all ELPM elements along this locus share the same phase value. As a consequence, the PDI of the ELPM exhibits a co-focal ellipses/circles shape. 
Besides, if multipath is involved, the PDI of the ELPM can be considered as the superposition of some co-focal ellipses/circles with shifted foci, which has a kind of quasi-symmetrical pattern. Therefore, it is expected that the PDI of some subarrays exhibit a high degree of similarity and even redundancy.

\begin{figure}
            \centering
		\includegraphics [width=200pt]{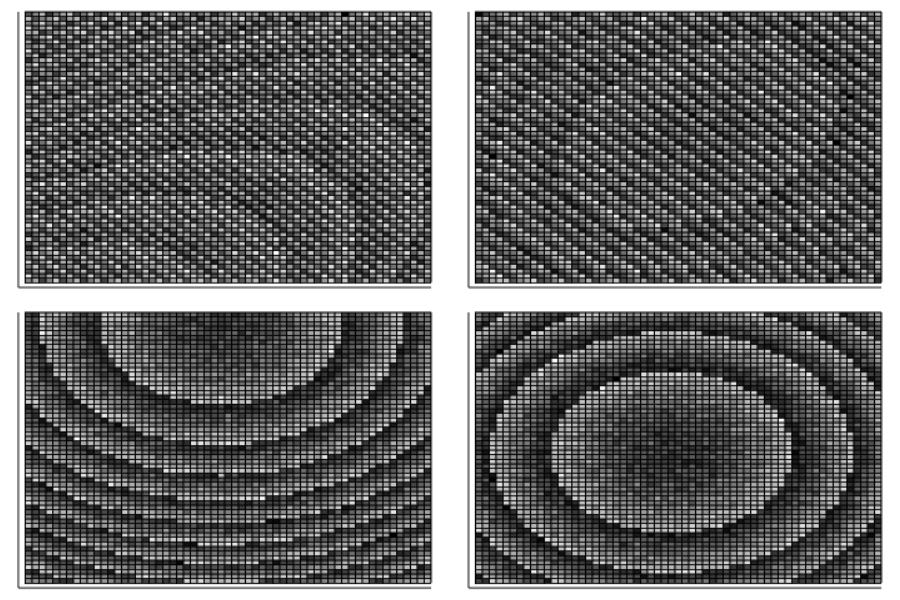} \\
		\caption{Phase Distribution Image (PDI) of the ELPM for SBF at different DFPs.
		} 
		\label{fig:multi_PDI}
\end{figure}

Two subarray PDIs may have similar patterns that are rotated relative to each other due to circular and oval patterns of ELPM's PDI, or become similar by adding a color bias value corresponding to the contrast difference between the patterns of the PDIs. In addition, all subarray PDIs have been supposed to be of the same number of pixels, and thus no scaling change is required here. Consequently, we look for a similarity criterion for identifying subarrays whose PDIs are of high similarity by considering the rotation or color bias of the images. 
Given the aforementioned conditions, we may consider the similarity criterion as the 2D Pearson cross-correlation coefficient of the PDIs, which satisfies all the above conditions and gives a value between 0 and 1 as the correlation level of two PDIs.

 The conventional Pearson cross-correlation does not take into account the {\it circularity} of the original phases corresponding to the color levels of the pixels of the PDI. To elaborate more, 
 the PDI is a gray-scale image, wherein the color of each pixel is the phase of the corresponding element ranging from 0 to 2$\pi$ (for the ideal non-quantized case). However, there is one important difference between an ordinary image and that obtained from a PDI. In an ordinary image, there exists \textbf{maximum contrast} between the color value 0, corresponding to the black color, and 255, corresponding to the white color; however, for the PDI, both colors 0 and 255 have \textbf{minimum contrast} since they are mapped to the circularly equivalent phase values of 0 and $2\pi$ respectively. In this regard, instead of the original 2D Pearson correlation metric, we employ the circular form of this metric.   Let $\wbold^{(m^s)}$ and $\wbold^{(m^t)}$ be the PDIs corresponding to the student subarray $m^{s}\in\M$ and teacher subarray $m^{t}\in\M$ respectively. 
The \textit{circular 2D Pearson's normalized cross-correlation coefficient} \cite{jammalamadaka2001topics} for subarrays $m^{(s)}$ and $m^{(t)}$ is defined as follows:
\begin{figure}
		\centering
		\includegraphics [width=200pt]{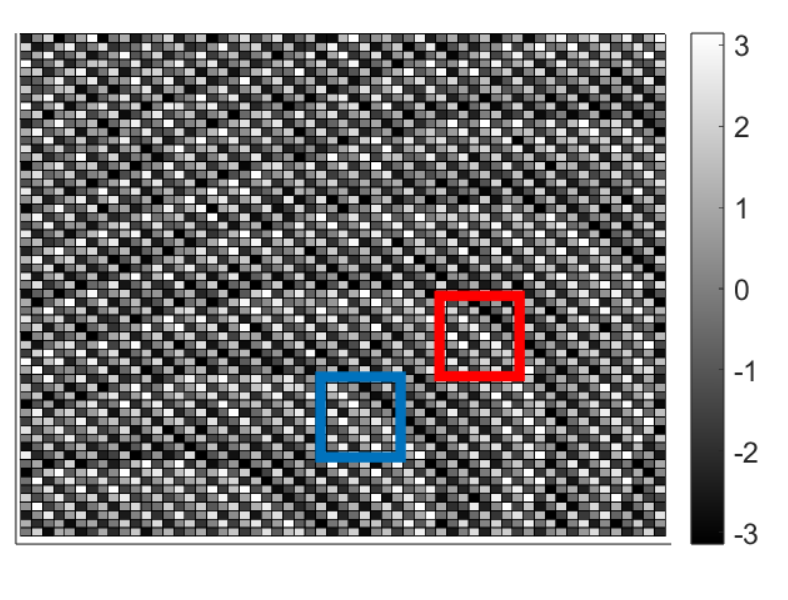} \\
		\caption{Two highly correlated regions in a typical ELPM PDI.
		} 
		\label{fig:Correlated region}
\end{figure} 
\begin{multline}
\label{eq:circular_pearson}
    C(\wbold^{(m^s)},\wbold^{(m^t)})
    =
    \\
\frac
{
\displaystyle \sum\limits_{i,j} \sin\left(\theta^{(m^s)}_{i,j}-\overline{\theta}^{(m^s)}\right)
\sin\left(\theta^{(m^t)}_{i,j}-\overline{\theta}^{(m^t)}\right)
}
{
\sqrt{
\displaystyle 
\sum\limits_{i,j} \sin\left(\theta^{(m^s)}_{i,j}-\overline{\theta}^{(m^s)}\right)^2
\sum\limits_{i,j} 
\sin\left(\theta^{(m^t)}_{i,j}-\overline{\theta}^{(m^t)}\right)^2
}
}
\end{multline}
where $\theta^{(m)}_{ij}= \arg \left(w^{(m)}_{ij}\right)$, and {\ $\overline{\theta}^{m}=\arg \left(\sum{w^{(m)}_{ij}}\right)$}.

The correlation metric expressed in \eqref{eq:circular_pearson}, considers the bias offset, as well as the circular periodicity of the color value. To further incorporate the rotation of the PDIs in the calculation of the correlation metric, we evaluate the \textit{effective circular cross-correlation coefficient} (ECC) between PDIs relating to subarrays $m^s$ and $m^t$ as follows: 
\begin{align}
\label{eq:ECC}
    C^*(\wbold^{(m^s)},\wbold^{(m^t)})=\max_{\theta \in \Theta}{C(\wbold^{(m^s)},\wbold^{(m^t)}_{\theta})}
\end{align}
 where $\Theta$ is the set of multiple evenly separated angles in the interval $[0,2\pi]$, and $\wbold^{(m)}_{\theta}$ denotes the rotation of $\wbold^{(m)}$ by $\theta$.

Fig. \ref{fig:Correlated region} illustrates two subarray PDIs of high similarity whose correlation can be evaluated through the proposed metric. As seen, the PDIs within the blue and red boxes have patterns with high similarity, due to the repeated oval shapes in the ELPM PDI and produce a high cross-correlation coefficient of about 0.9. This means that the policy trained on either of these subarrays in the SBF process can act as a teacher policy for the other, leading to a great improvement in the learning speed of the student subarray's agent.

\section{Proposed TL-Based Techniques to Solve Formulated SBF Pproblems}
\label{sec:TLSolution}
In this section, we introduce the concept of subarray policy propagation, drawing upon the adaptive policy reuse technique. This approach leverages previously trained policies for distinct subarrays within the ELPM to expedite the training of remaining subarrays, directly tackling problem \textbf{P1}. Subsequently, we propose the idea of policy blending. This method utilizes a set of pre-trained policies associated with different DFPs to construct an optimal policy for a new DFP, effectively solving problem \textbf{P2}. Through these advancements, we aim to significantly enhance the training efficiency and adaptability of the DRL-based near-field SBF approach. 

\subsection{Solution to P1 Through Adaptive Policy Reuse} 
\subsubsection{Subarray Policy Propagation} 

The SBF solution scheme proposed by \cite{monemi20246g} considers each subarray as an independent entity for training purposes. However, based on the preceding section's analysis of PDI similarity, we can leverage the policies trained for some subarrays to facilitate the training process of others. This approach exploits the inherent similarity between the subarrays’ policies and allows us to tackle problem \textbf{P1} according to Algorithm 1 as described in the following.

At the initialization phase, we assume that all subarrays of the ELPM constitute the set of students $\M^s\leftarrow\M$, and there exists no pre-trained teacher subarray, i.e., $\M^t\leftarrow\O$. To be able to transfer the policies, we need at least $K_1\geq 1$ trained subarrays' policies ($|\M^t| \geq K_1$). 
{\color{black}
In our simulations, we have selected $K_1= 1$. This choice accounts for the maximum potential benefit of any possible teacher candidate at the initialization phase, even when the set of teacher candidates is of a small cardinality.}
Following the initialization phase, a teacher policy library is established. This library’s knowledge is dynamically applied through policy reuse techniques to efficiently train student policies. The library grows dynamically and becomes enriched over time. This process is achieved in the Main Procedure in lines 3-21 of the algorithm 1.  The policy reuse technique requires finding the best source policy for the target task. We apply the similarity criterion on PDIs to find the best teacher policy for training a specific student subarray policy. 
In doing so, in steps 4-5, first, a student candidate $m^s$ is selected from $\M^s$ as some neighbor of one of the teacher candidates $\M^t$, and then the candidate teachers  $\mathcal{M}^{t,cnd}$ is selected as the first $K_2$ trained agents of subarrays with  geometrical distance from the student subarray location in the ELPM.
To find the best teacher candidate for $m^s$, for each $m\in\M^{t,cnd}$, first we start a test mode by temporarily transferring the policy $\pi^{(m)}$ to $\pi^{(m^s)}$ and then begin a temporary training phase with a very limited number of iterations for subarray $m^s$ leading to the temporary PDI, $\wbold^{(m^{s,t})}$.  
Here, we consider that the temporary phase learning rate $l^{r,t}$, and the exploration noise decay rate $\alpha^{xp,t}$ are much higher respectively than those relating to the final training phase ($l^{r}$ and $\alpha^{xp}$). This ensures that the estimated results are obtained from the test mode in very few numbers of iterations. 
From all teacher candidates, we select the one corresponding to the highest ECC according to step 14 of algorithm 1. The search for the optimal teacher candidate can be terminated if the ECC value exceeds a certain threshold. In that case, the current candidate is regarded as the best teacher. This is seen in steps 10-12 of the algorithm 1. After selecting the best teacher, we use the teacher policy for training the student subarray.
Direct policy reuse, where the identical policy from a teacher policy is mechanically applied to a student policy without adjustments, is demonstrably suboptimal. This arises due to the inherent non-equivalence of the source and target tasks' optimal beamfocusing submatrices. This disparity stems from the varying physical locations of the subarrays within the ELPM, necessitating tailored policies for optimal performance.  
Therefore, we use a revised adaptive policy reuse technique called QLLs, which is described in  Section \ref{sec:QLL}. At this stage, we employ the teacher policy $\pi^{(m^t)}$ for training the student policy  $\pi^{(m^s)}$ according to step 16 of the algorithm 1.
Then, the trained student subarray $m_s$ is removed from $\M^s$ and added to the set of possible teacher candidates $\M^t$. This enriches the policy library and increases the flexibility of finding more appropriate teachers for training the rest of the students. Since this learning process gradually propagates throughout the ELPM subarrays, we have entitled it as \textit{subarray policy propagation}.
The solution of \textbf{P1} is finally obtained by constituting the beamfocusing matrix {\color{black}$\widehat{\wbold}$} in \eqref{eq:BM} using $\widehat{\wbold}^{(m)},\forall m\in\M$.
The proposed subarray policy propagation algorithm accomplishes the training process for all $M$ subarrays of ELPM, with a much higher convergence speed than that achieved in \cite{monemi20246g}. More specifically, we will show in the numerical results that using the subarray policy propagation technique enhances the average training speed of subarrays by about 4 times.



\begin{algorithm}[t]
		\caption{\small\!: Subarray policy propagation}
		\begin{algorithmic}[1]
            \Statex \textbf{Variables:} Let $\M^s$ and $\M^t$ be the set of candidate students and teachers, $\alpha^{xp}$ be the initial exploration noise decay rate, $l^r$ be learning rate, and  $\alpha^{xp,t}$ and $l^{r,t}$ be those relating to the teacher selection in test mode respectively, where $\alpha^{xp,t} \gg \alpha^{xp}$ and $l^{r,t} \gg l^r$, and $C^\mathrm{th_1}$, and $C^\mathrm{th_2}$ two correlation thresholds where $C^\mathrm{th_2}>C^\mathrm{th_1}$.
            \Statex \textbf{Initialization:}
            \State Set $\M^s\leftarrow\M$, $\M^t\leftarrow\O$.
            \State Train a set of random $K_1 \geq 1$ subarray policies without employing TL and add the indices of the corresponding subarrays to $\M^t$.
            \Statex
            \textbf{Main Procedure:}
			\While{ $\M^s \neq \O$}
   
                \State Select a  candidate $m^s$ from $\M^s$ as some neighbor of one of the teacher candidates $\M^t$.
                
                \State Select $\M^{t,cnd}\subset\M^t$ as the first $K_2$ teacher candidate neighbors  of subarray $m^s$, whose has minimum geometrical distance to subarray $m^s$.
                \For {\textbf{each} teacher candidate $m\in \M^{t,cnd}$}  
                \State Transfer the policy of $m$ to $m^s$ $\left(\pi^{(m^s)} \leftarrow \pi^{(m)}\right)$.
                \State Start a very limited number of training iterations for subarray $m^s$ using $\alpha^{ex,t}$ and $l^{r,t}$, leading to $\wbold^{(m^{s,t})}$.
                \State Calculate $C^{(m)}= C^*(\wbold^{(m^{s,t})},\wbold^{(m)})$ from \eqref{eq:ECC}.
                \If{$C^{(m)}>C^\mathrm{th_1}$}
                    \State $m^t\leftarrow m$; Go to step 15.
                \EndIf
                \EndFor
                \State Set $m^t=\argmax_{m \in \M^{t,cnd}}\{C^{(m)}\}$.
                \If {$C^{(m^t)}>C^\mathrm{th_2}$}
			\State Transfer the policy from subarray $m^t$ to $m^s$ (i.e.,  $\pi^{(m^s)} \leftarrow \pi^{(m^t)}$), and train $\pi^{(m^s)}$ using QLL with parameters $\alpha^{ex}$ and $l^r$.
            \Else
                \State Train subarray $m^s$ without using TL.
            \EndIf
             \State Set $\M^t\leftarrow \M^t \cup \{m^s\}$, $\M^s \leftarrow \M^s \setminus \{m^s\}$.
			\EndWhile
	\State Constitute {\color{black}$\widehat{\wbold}$} according to \eqref{eq:BM} using $\widehat{\wbold}^{(m)},\forall m\in\M$.	\end{algorithmic}
	\end{algorithm}


\subsubsection{Quasi-Liquid Layers (QLLs)} 
\label{sec:QLL}

In the adaptive policy reuse method, the lower layers of NNs (those closer to the input layer) contain the history of previous training; they undergo minor changes at the beginning of the learning process. In contrast, the upper layers learn more complex patterns and have more flexibility to adapt to new tasks. This justifies the implementation of the popular approach for the adaptive policy reuse method, where the fine-tuning layer is employed in a hard-switched manner. In general, the learning rate of the outer layer is set to a non-zero value, whereas other layers are frozen by having the learning rate equal to zero.
Although this technique results in satisfactory responses in most scenarios, a better response might be obtained if we consider greater freedom by changing the hard-switching between frozen and non-frozen layers into a soft-switching scheme.
In our proposed adaptive reuse technique, we have considered liquidation of the upper layers by increasing the learning rate of these layers in a multi-step manner. We call this technique as quasi-liquid layers (QLLs). Let a DNN of the DRL originally consist of $n_1$ layers, and we add $n_2$ fine-tuning layers, resulting in a total number of layers $n_t=n_1+n_2$. We consider that the learning rate of each layer $n$ is obtained from the following relation:
\begin{align}
    \label{eq:36743}
    l^r_n=l^r_{n_t} \left[
    u(n_0)-u(n_t+1)\right] g(n)
\end{align}
where $l^r_{n_t}$ is the learning rate of the final layer, $u(.)$ is the unit step function, $n_0$ is the first non-frozen layer ($n_0\leq n_1$), and $g(n)\in(0,1]$ is an increasing function having the property $g(n_t)=1$. 
It is seen from \eqref{eq:36743} that the DNN is frozen for $n<n_0$, 
and for $n\geq n_0$, the layers are gradually turned into a liquid state by getting a higher learning rate, and finally the maximum learning rate $l^r_{n_t}$ is applied to the outer layer $n_t$. It should be noted that the commonly hard-switching fine-tune layer idea is a special case of QLL wherein $n_t=n_0$.

\subsection{Solution to P2 Through Policy Blending}

\subsubsection{Policy Blending Procedure for Dynamic DFP Management}
When the DFP changes due to UE movement, the DRL agent updates the policies of all subarrays to refocus the beam at the new location. These updated policies can be incorporated into a policy library, enriching its knowledge base and contributing to the training process for future DFPs.
This approach draws inspiration from the subarray policy propagation scenario, where policies from trained subarrays accelerate the training of others. Similarly, for dynamic DFPs, it is recommended to leverage previously acquired knowledge from the policy library instead of independent training for each new DFP. In this context, the teacher policies associated with past DFPs can be stored in the library and collectively utilized to train the entire ELPM array. 
In this context, as expressed in Problem \textbf{P2}, suppose that the ELPM has already been trained for a set of DFPs  $\boldsymbol{\mathcal{D}}^+$ corresponding to optimal policies $\boldsymbol{\pi}^{+}$, and the UE moves to a new location $\mathcal{D}^{-}$ wherein the optimal beamfocusing matrix is to be calculated according to a policy denoted by $\boldsymbol{\pi}^{-}$. To do so, we propose the {\it policy blending} technique. This method employs multiple policies (referred to as \textbf{components}) to generate a target policy output (referred to as \textbf{blend}) by applying a function (referred to as \textbf{rule}). For this scenario, the selected rule is the linear scalarization policy blending method. This method computes the weighted average of the component policies as follows:
\begin{align}
    \label{eq:blending}
    \boldsymbol{\pi}_{\mathrm{blend}}=\sum_{i=1}^{K}{\beta_{i}\boldsymbol{\pi}_{i}},\ \  \ \ \textrm{where}\ \sum_{i=1}^{K}{\beta_{i}}=1
\end{align}
where $\boldsymbol{\pi}_i=\{ \pi_i^{(1)}, \pi_i^{(2)}, ..., \pi_i^{(M)} \}$ is the policies relating to the $i$'th member of the $K$-component policy set $\boldsymbol{\pi}_{\mathrm{cmp}}$, and $\beta_i$ is the corresponding weighting coefficient; A higher weighting coefficient should be assigned to a teacher policy if it is expected to have a greater impact on the blend policy. The overall policy blending expressed in \eqref{eq:blending} can be split into the blending of the policies of the corresponding subarrays as represented in the following: 
\begin{align}
    \label{eq:blending_sub}
    \pi_{\mathrm{blend}}^{(m)}=\sum_{i=1}^{K}{\beta_{i}\pi^{(m)}_{i}}, \forall m\in\M
\end{align}
where $\pi_i^{(m)}$ is the trained policy of the $i$'th element of $\boldsymbol{\pi}_{\mathrm{cmp}}$ relating subarray $m$.
The overall policy blending procedure is presented in Algorithm 2.  
The starting point for training $\boldsymbol{\pi}^{-}$ is to constitute $\boldsymbol{\pi}_{\mathrm{cmp}}$ by selecting a limited number of proper component policies  from the policy library $\boldsymbol{\pi}^{+}$ according to some strategy which is pointed out in Section \ref{rmk:1}.
To adjust the blending  coefficients, we set a pre-training phase to estimate the effect of each component policy $\boldsymbol{\pi}_{\mathrm{cmp}}$ in the optimal policy value $\widetilde{\boldsymbol{\pi}}^{-}$. In the pre-training phase, we initially transfer each component policy $\boldsymbol{\pi}_i$ to $\boldsymbol{\pi}^{-}$, and start a very limited number of training iterations for $\boldsymbol{\pi}^{-}$ with a high learning rate and low exploration noise. 
Let  $p_{\boldsymbol{\pi}\rightarrow \boldsymbol{\pi}^{-}}$ denote the power level measured at the final step of the pre-training phase when a given policy $\boldsymbol{\pi}$ has been initially transferred to $\boldsymbol{\pi}^{-}$. Considering that the objective is to achieve SBF with a high power density at the DFP, the probability of higher similarity between  $\boldsymbol{\pi}$ and $\widetilde{\boldsymbol{\pi}}^{-}$ potentially increases if the power level $p_{\boldsymbol{\pi}\rightarrow \boldsymbol{\pi}^{-}}$ is raised. Therefore we calculate the weighting coefficient $\beta_i$ corresponding to each component policy $\boldsymbol{\pi}_{\mathrm{cmp}}$ as follows:
\begin{align}
    \label{eq:beta}
    \beta_i=
    \dfrac{
    p_{\boldsymbol{\pi}_i\rightarrow \boldsymbol{\pi}^{-}}
    }
    {
    \sum_{j=1}^{K} p_{\boldsymbol{\pi}_j\rightarrow \boldsymbol{\boldsymbol{\pi}}^{-}}
    }   
\end{align}

Once, all blending coefficients are calculated in the pre-training phase, the blending policy can be established according to  \eqref{eq:blending} for each subarray $m\in\M$. At this step, all subarrays can be trained to obtain the optimal policy $\widetilde{\boldsymbol{\pi}}^{-}$. Finally, the optimal policy is transferred to the policy library to serve as a candidate component in policy training for future DFPs.

\subsubsection{How to Select Policy Components}
\label{rmk:1}
Depending on accessible technologies and available parameters, different strategies might be adopted to select the set of component policies $\boldsymbol{\pi}_{\mathrm{cmp}}$ (corresponding to the focal points $\boldsymbol{\mathcal{D}}_{\mathrm{cmp}}$) from the policy library. For example, if the location of the focal points in $\boldsymbol{\mathcal{D}}^{+}$ is available through localization schemes, we can take up to $K$ focal points with the closest distance to $\mathcal{D}^{-}$,  as long as each distance is less than a minimum threshold value; this raises the probability of higher similarity of the PDIs of  $\boldsymbol{\mathcal{D}}_{\mathrm{cmp}}$ with that of  $\mathcal{D}^{-}$. For the case where the exact estimation of locations is not available, we may simply choose the last $K$ policies recently added to the library. 
This scheme is applicable   when the duration for the DFP to transition from one position to the subsequent one is not less than the time required to train the subarray policies at each respective DFP.


\begin{algorithm}[t]
		\caption{\small\!: Policy blending for a new DFP}
		\begin{algorithmic}[1]
                     \Statex \textbf{Variables:} Let $\boldsymbol{\mathcal{D}}^+$ be the set of focal points for which the ELPM (i.e., all subarrays) DRL policies have been trained, and $\mathcal{D}^{-}$  be a new DFP in the 3D space; Let $\boldsymbol{\pi}^{+}$ and $\boldsymbol{\pi}^{-}$ be the set of policies corresponding to $\boldsymbol{\mathcal{D}}^{+}$ and $\mathcal{D}^{-}$ respectively.  
                \Statex \textbf{Main Procedure:}
                
			\State Select $\boldsymbol{\pi}_{\mathrm{cmp}}$ as the set of $K$-component policies that are to be incorporated in the training of $\boldsymbol{\pi}^{-}$, where $\boldsymbol{\pi}_{\mathrm{cmp}}\subset\boldsymbol{\pi}^+$ is chosen according to some strategy as pointed out in Section \ref{rmk:1}.
            \State For each $\mathcal{\boldsymbol{\pi}}_i \in \boldsymbol{\mathcal{\pi}}_{\mathrm{cmp}}$, set a pre-training phase by initially transferring the trained policy of $\mathcal{\boldsymbol{\pi}}_i$ to $\mathcal{\boldsymbol{\pi}^{-}}$, and then run a very limited number of iterations with a high learning rate and low exploration noise.
            
            \State For each $\boldsymbol{\pi}_i \in \boldsymbol{\pi}_{\mathrm{cmp}}$, calculate the corresponding weighting coefficient $\beta_i$  according to \eqref{eq:beta}.
            
            \State For each $m\in\M$, constitute the ELPM DRL policy  ${\pi}_{\mathrm{blend}}^{(m)}$ 
            according to the blending rule \eqref{eq:blending},
            and start training  
            all subarrays policies to find the optimal policy ${\widetilde{\boldsymbol{\pi}}}^{-}$.
            \State Set $\boldsymbol{\pi}^{+}\leftarrow \boldsymbol{\pi}^{+} \cup \{{\widetilde{\boldsymbol{\pi}}}^{-} \}$.
		\end{algorithmic}
	\end{algorithm}


\begin{table*}
		\centering
		\caption{Simulation Parameters}
		\begin{tabular}{|l|l|l|l|}
	\hline
		\textbf{Parameter}
        & 
        \textbf{Description}
        &
        \textbf{Parameter}  
        &
        \textbf{Description} 
        \\
        \hline
        Frequency
        &
        $28$ GHz
        &
        Path-loss exponent ($\alpha$)
        &
        $2.7$
        \\
        Reflection coefficient ($\beta_{nl}, \forall n,l$) 
        & 
        $0.1$
        &
        PM subarray elements	  &
        $6\times 6$
        \\
        ELPM number of modules
	&
        100
        &
        Room Dimensions
        & 
        $4 \times 4 \times 3\ \mathrm{m}^3$                  
		\\
        Exploration noise variance ($\nu^a$)
        & 
        $0.5$
        &
        Target policy variance
&
        $0.1$
\\
        Target policy decay rate&
       $10^{-4}$
&
                $K$&
        3\\	
        $(\overline{\nu}^{xp},\underline{\nu}^{xp})$
& 
                $10^{-5}$
&
        $(T_1,T_2)$
        &
        (1,3) 
       \\
                ($C^\mathrm{th_1}$,$C^\mathrm{th_2}$)&
                (0.5,0.9)&
        $(\alpha^{xp} , l^{r})$
        &
        ($10^{-5}$ , $10^{-3}$)\\

           $K_1$
           &
            1
       &
         $(\alpha^{xp,t} , l^{r,t})$
          &
        ($5\times10^{-4}$ , $10^{-2}$)\\
        ($n_0,n_t$)
        &
        (4,8)
        &
        ($l_4^{r},l_6^{r},l_8^{r}$)
        &
       ($0.5\times10^{-4}$ , $0.6\times10^{-4}$ , $0.7\times10^{-4}$)\\
        \hline
        \end{tabular}
        \label{tbl:simulation_params}
\end{table*}
 
 \section{Numerical Results}
In this section, we investigate how TL impacts the performance of SBF for different simulation scenarios. We have considered an ELPM consisting of $60\times 60=3600$ antenna elements. The ELPM is composed of $M=10\times10=100$ subarrays, each having $N'=6\times 6=36$ elements. The horizontal and vertical distance between adjacent ELPM elements is $\lambda/2$.
Unless explicitly specified otherwise, the default simulation parameters are expressed in table \ref{tbl:simulation_params}.
Same as \cite{monemi20246g}, we have considered a multipath channel model in a $4\times4\times 3 \ \mathrm{m}^3$ room. The DFP is located in $(1,1.5,1.4)$ m, and the location of ELPM bottom-left corner is $(1,0,1.5)$ m.
The DRL for each subarray module consists of one actor DNN and two critics DNNs. The actor DNN starts with a normalization input layer followed by a $16N'$-neuron fully connected layer (FCL), a Rectified Linear Unit (RELU) layer, a $16N'$-neuron FCL, a RELU layer, and then an FCL with $N'$-neuron, a hyperbolic tangent (tanh) layer and a scaling output layer to map the output space domain into $[-\pi, \pi]^{N'}$. The critic DNNs for each of the two agents of the TD3-DRL network start with $32N'$-neuron input layer concatenating the observation and action inputs, followed by $32N'$-neuron FCL, RELU, $16N'$-neuron FCL, a tanh layer, and finally a 1-neuron fully connected output layer.  To implement the QLL method, an additional fine-tuning layer is appended between the $16N'$-neuron FCL and the output scaling tanh layer. This fine-tuning layer has the same structure as the $16N'$-neuron FCL, but with different learning parameters.
 
 \subsection{Evaluation of Similarity Criterion}
In this part, we examine the effectiveness of the proposed similarity criterion expressed in \eqref{eq:circular_pearson}, which plays a crucial impact on the performance of the adaptive policy transfer technique. 
Fig. \ref{fig:Correlation criteria} illustrates the degree of correlation (ECC) between several typical subarray PDIs. Fig \ref{fig:Correlation criteria}-a shows the overall PDI after training all subarrays of ELPM through the DRL scheme without employing the TL by considering 3-bit phase shifters. The subarray enclosed in a red square is considered a typical student subarray $m^s$ for which we are going to search through all ELPM subarrays to find teachers with high similarity criterion.
Using \eqref{eq:circular_pearson}, the correlation between PDI of the student subarray $m^s$, and those relating to all possible subarrays $m^t\in \M | m^t\neq m^s$ has been calculated for rotation angles ranging from  $0\degree$ to $360\degree$ with the step size of $10\degree$. 
For each rotation angle $\theta$, we are interested in finding teacher candidates having high values of $C(\wbold^{(m^s)},\wbold^{(m^t)}_{\theta})$ in \eqref{eq:ECC}. For example, if we consider a minimum threshold value of 0.5 ($C>0.5$), the set of candidate teachers is found to be non-empty only for $\theta=0\degree$ and $\theta=90\degree$, as depicted in Figs. \ref{fig:Correlation criteria}-b and \ref{fig:Correlation criteria}-c respectively.  
Upon examination of both cases, it is seen that a substantial number of candidates possess a value of C
 that exceeds 0.5. However, when the similarity threshold is escalated from 0.5 to 0.9, there is a marked decrease in the number of candidates for whom C
 surpasses 0.9. This reduction is significantly more pronounced in Fig. \ref{fig:Correlation criteria}-c as compared to Fig. \ref{fig:Correlation criteria}-b. 


 \subsection{Subarray Policy Propagation}
The effectiveness of the proposed policy propagation scheme in increasing the learning speed of the overall SBF algorithm is investigated in this section. 
Figs. \ref{fig:3bit-subarray TRL} and \ref{fig:4bit-subarray TRL} depict the normalized power density relating to a single subarray, per training iteration number, for the case where 3-bit and 4-bit phase shifters are employed respectively. The results have been obtained by averaging the power density of all 100 subarrays of the ELPM at each training iteration. Four scenarios have been considered, including DRL with no TL as in \cite{monemi20246g} denoted by No-TL,  and DRL with policy propagation (PP) according to Algorithm 1 if the ECC is higher/lower than the minimum allowed threshold, denoted respectively by PP-High-ECC and PP-Low-ECC.  The solid blue curve corresponds to the PP-HighECC-QLL where $C^{(m^t)}>C^\mathrm{th_2}$ and QLL is considered, the dashed blue curve corresponds to the PP-HighECC where $C^{(m^t)}>C^\mathrm{th_2}$ with no QLL (conventional adaptive policy reuse),  and the solid red curve corresponds to the PP-LowECC where $C^\mathrm{th_1}< C^{(m^t)}\leq C^\mathrm{th_2}$. Finally, the red dashed line shows the normalized target power density, which is computed by using the exact CSI of all array elements and tuning the beamfocusing matrix coefficients to make the signals of all the elements in phase at DFP.
It is seen in Fig. \ref{fig:3bit-subarray TRL}  that the PP-HighECC-QLL starts with a bias of 34\% of the target and reaches the maximum power of No-TL at only 25k iterations. This corresponds to experiencing a 4-fold increase in convergence speed of PP-HighECC-QLL compared to that of No-TL. In addition to the high increase in the convergence speed, it is also seen that for PP-HighECC-QLL, the asymptotic power density (i.e., after convergence) is 10\% higher than that in the No-TL scenario, which corresponds to a lower BFR at the DFP. As demonstrated in Fig. \ref{fig:4bit-subarray TRL}, it is observed that by increasing the resolution of the phase shifters to 4-bit, the learning bias of PP-HighECC-QLL has risen to 48\%, in addition to a 5-fold increase in the convergence speed compared to that in the No-TL.  Besides, the power has reached 98\% of the target value.
To examine the impact of using the QLL, by comparing the PP-HighECC-QLL and PP-HighECC TL schemes, it is observed that the proposed QLL technique achieves about 20\% enhancement in the training speed relative to the conventional adaptive policy reuse technique as shown in both Figs. \ref{fig:3bit-subarray TRL} and \ref{fig:4bit-subarray TRL}. While the outperformance of employing TL with proper teacher selection schemes is obviously seen in both figures, the PP-LowECC curve reveals that a bad teacher selection results in severe performance degradation, even worse than that of the No-TL scenario.

We have thus far investigated the impact of TL on the performance metrics of an individual subarray. Moving forward, our evaluation will shift to encompass the performance of the entire ELPM SBF scheme, where all subarrays collaborate to generate a focused beam with a spot-like profile. We compare the TL-based SBF scheme with the conventional SBF without TL, in terms of BFR and convergence. Fig. \ref{fig:ELPM policy propagation} shows the result of the implementation of the proposed TL-based scheme in all subarrays of ELPM to form the overall SBF. The ELPM results have also been compared to the ones obtained from a single subarray.
It is observed that after only 30k iterations, the QLL algorithm attains 96\% of the target power density, whereas, in the No-TL scenario, it reaches only 80\% of the power density obtained after 100k iterations. This implies a 300\% enhancement in the learning speed as well as a 20\% improvement in the power density. From the perspective of the size of the focal region, the synergistic combining of all subarrays beamfocusing submatrices to form the overall ELPM beamfocusing matrix leads to a focused beam with the 
BFR reduced from approximately $45$cm to  $6.3$cm. Comparing the results of PP-HighECC-QLL to the No-TL scenario \cite{monemi20246g}, it is also observed that the learning speed is raised about 4 times and the focused power density at the DFP is enhanced by about 20\%.

    


 \begin{figure}
    \centering
        
    \begin{tabular}{c} 
    \includegraphics[width=0.8\linewidth]{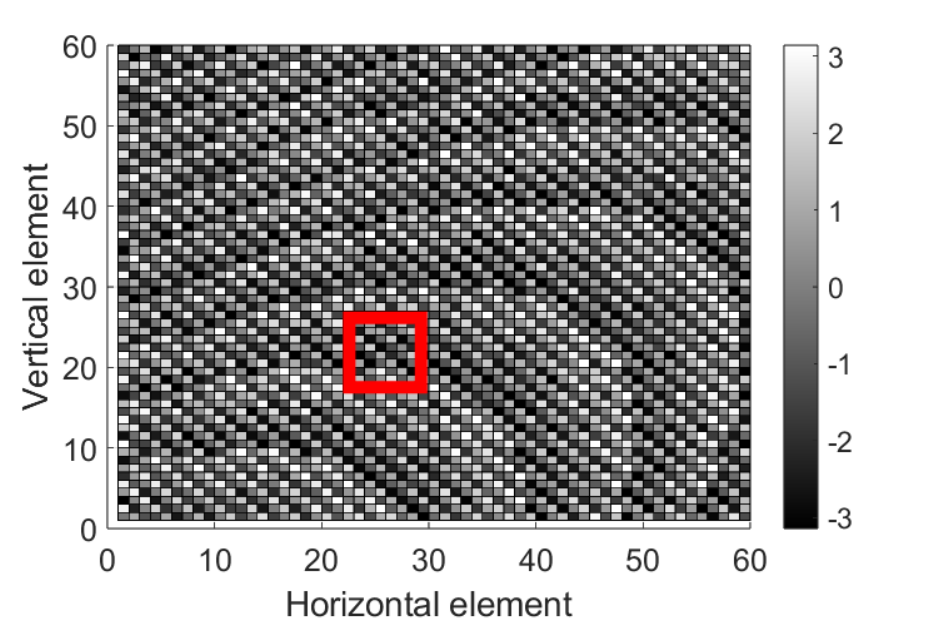}
    \\
    (a)
    \\
    \includegraphics[width=0.8\linewidth]{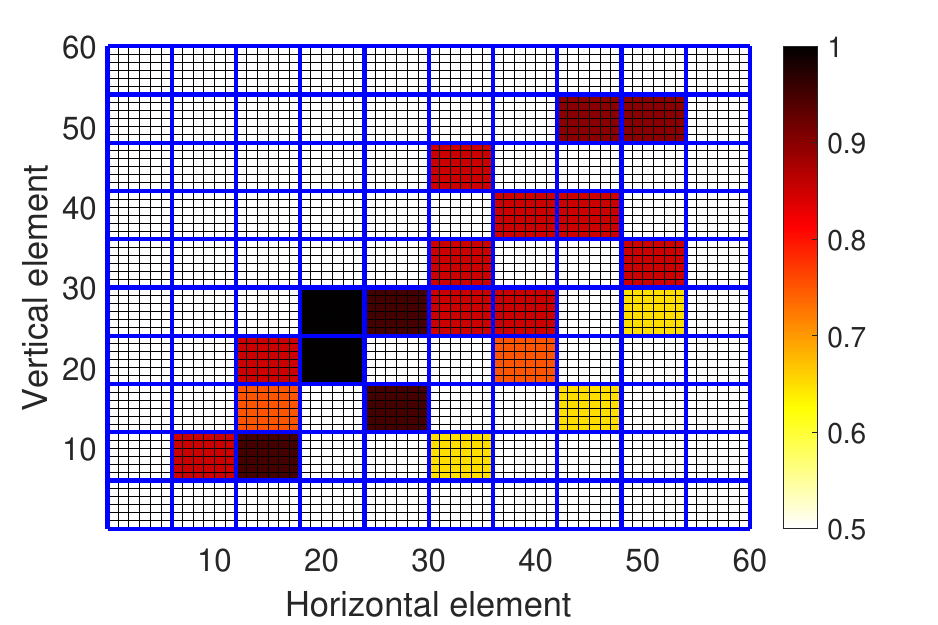}
    \\
    (b)
        \\
    \includegraphics[width=0.8\linewidth]{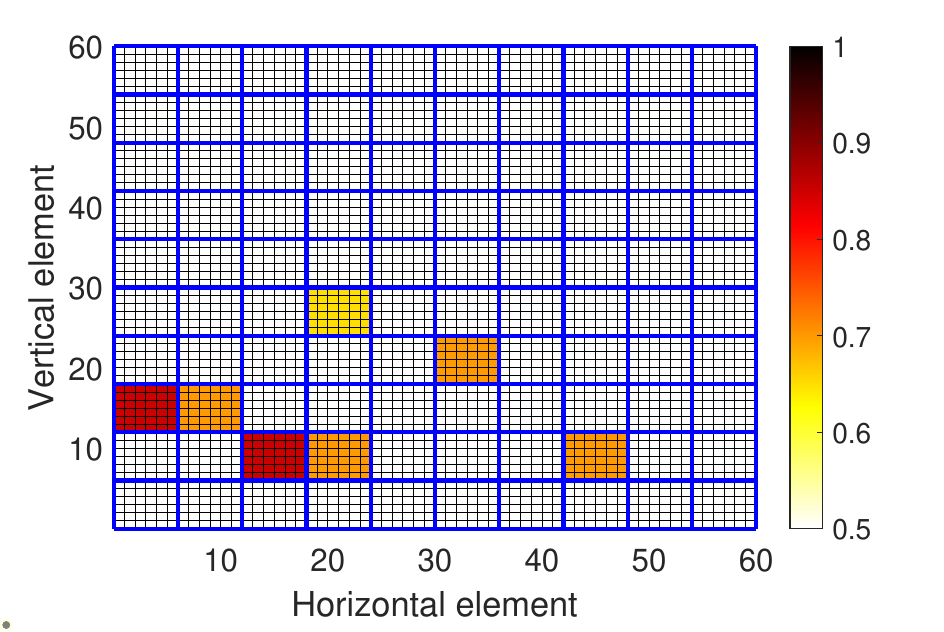}
    \\
     (c)
    \end{tabular}
    \caption{Illustration of the PDI and similarity criterion (a) PDI of ELPM with 3-bit phase shifters and a reference typical subarray (the red box). (b) ECC  between PDIs of  ELPMs subarrays and the reference subarray, for $\theta=0\degree$. (b) ECC  between PDIs of  ELPMs subarrays and the reference subarray, for $\theta=90\degree$.} 

    \label{fig:Correlation criteria}
\end{figure}

 \subsection{DFP Policy Blending}
 In this part, we investigate how selecting proper policy components from the library and applying the blending rule enhances the learning speed for new DFPs.
To evaluate the performance of the proposed policy blending scheme, we set a Monte Carlo simulation scenario as described in the following.
Consider several DFPs each having a random distance from the center of the ELPM with uniform distribution $U(1m,2m)$ for each of which the optimal policy has been obtained through the policy propagation technique and added to the policy library. Once the library is filled with at least $N^\mathrm{lib}$ policies, for a new DFP, the policy blending scheme is run by selecting  $K$-blending component policies whose corresponding DFPs are closest to the new DFP, and then we obtain the optimal policy through Algorithm 2. 
Fig. \ref{fig:policy blending} compares the convergence speed of the policy blending scenarios with $N^\mathrm{lib}\in\{10,50\}$ and $K\in\{1,2,3\}$, as well as the case of No-TL. The blue and orange bars correspond to the results of $N^\mathrm{lib}=10$ and $N^\mathrm{lib}=50$ respectively.
All results have been obtained by averaging from 100 independent Monte Carlo snapshots. It is seen how increasing the number of blending components $K$, as well as the size of the library $N^\mathrm{lib}$ elevates the training speed of the student policy. The first is because a student policy commonly learns better from a higher number of teacher components due to the incorporation of more related knowledge in the training process. The latter is because the growth in the number of DFPs in the library lowers the average distance between the new DFP and the neighboring DFPs $\mathcal{D}^+$. This, potentially raises the ECC of the teacher candidates, resulting in a higher training speed. For example, the best results have been obtained for $K=3$ and $N^\mathrm{lib}=50$, reaching a training speed of about 12k which is 8.3 times faster than the No-TL scenario.

\begin{figure}
		\centering
		\includegraphics [width=254pt]{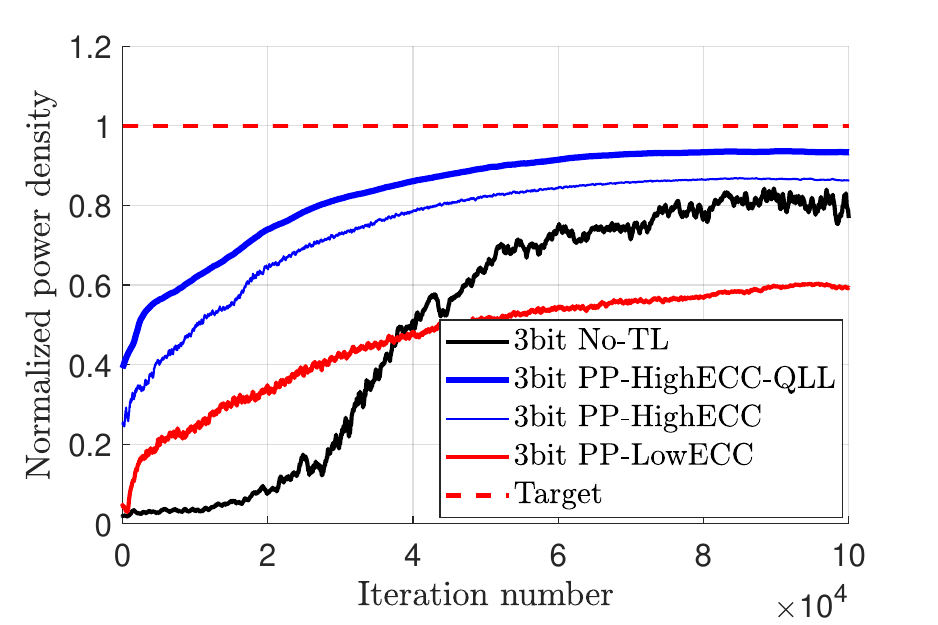} \\
		\caption{The normalized power density per training iteration number for a single ELPM subarray employing a 3-bit phase shifter for four scenarios.
		} 
		\label{fig:3bit-subarray TRL}
\end{figure}

\begin{figure}
		\centering
		\includegraphics [width=254pt]{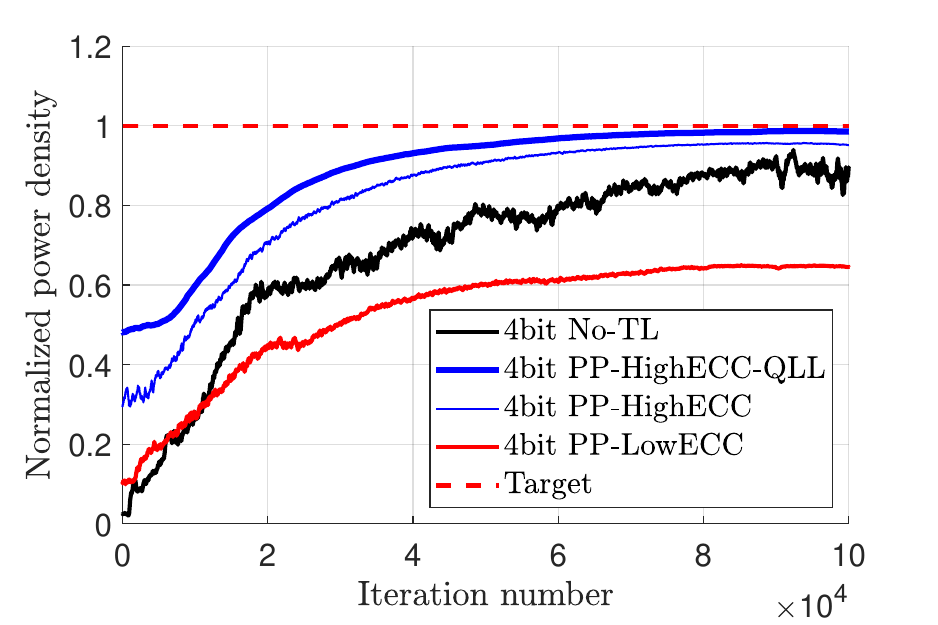} \\
		\caption{The normalized power density per training iteration number for a single ELPM subarray employing a 4-bit phase shifter for four scenarios.
		} 
		\label{fig:4bit-subarray TRL}
\end{figure}

\begin{figure}
		\centering
		\includegraphics [width=277pt]{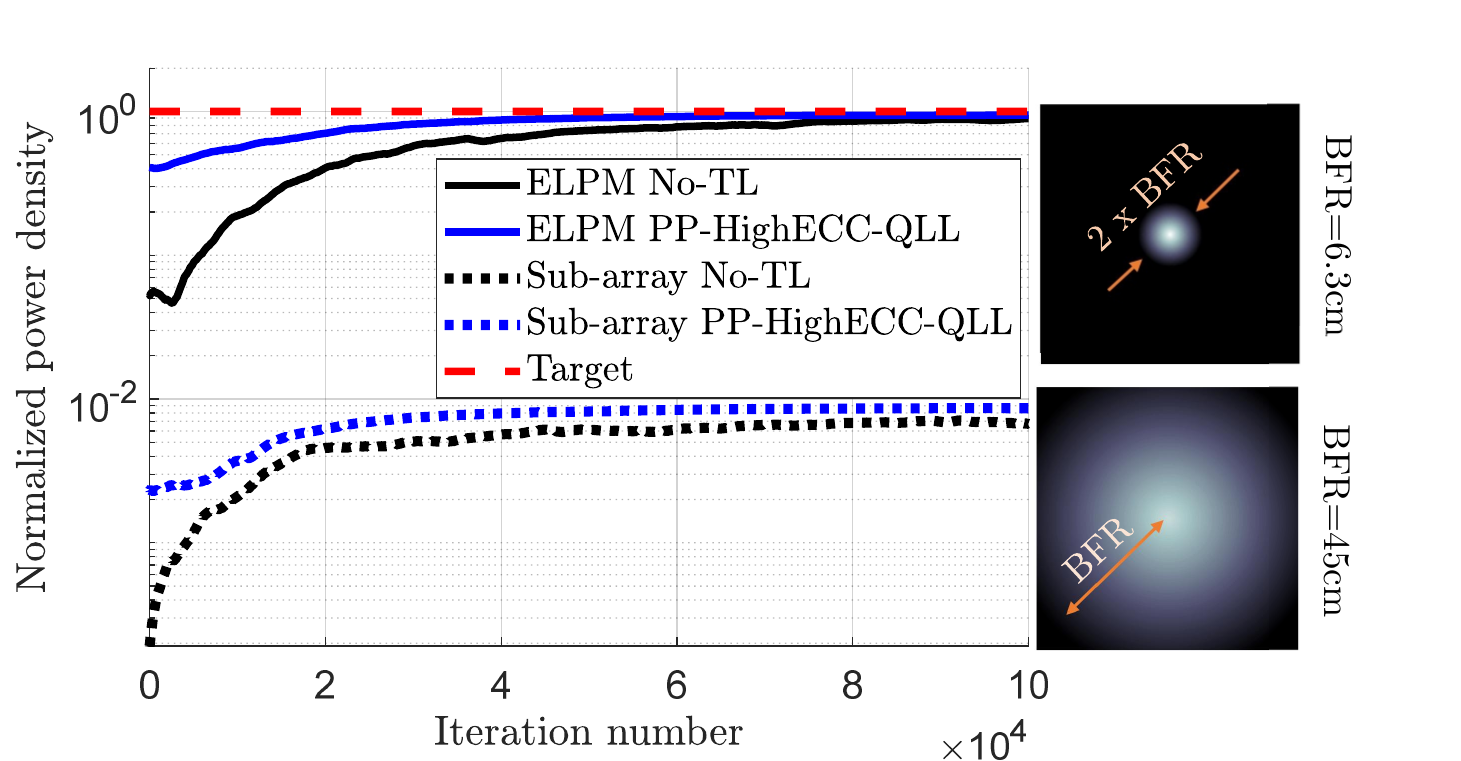} \\
		\caption{Performance comparison of QLL technique applied to all subarrays of the ELPM 
versus that relating to a single subarray in terms of BFR and convergence speed, considering 4-bit phase shifters.  } 
		\label{fig:ELPM policy propagation}
\end{figure}

\begin{figure}
		\centering
		\includegraphics [width=254pt]{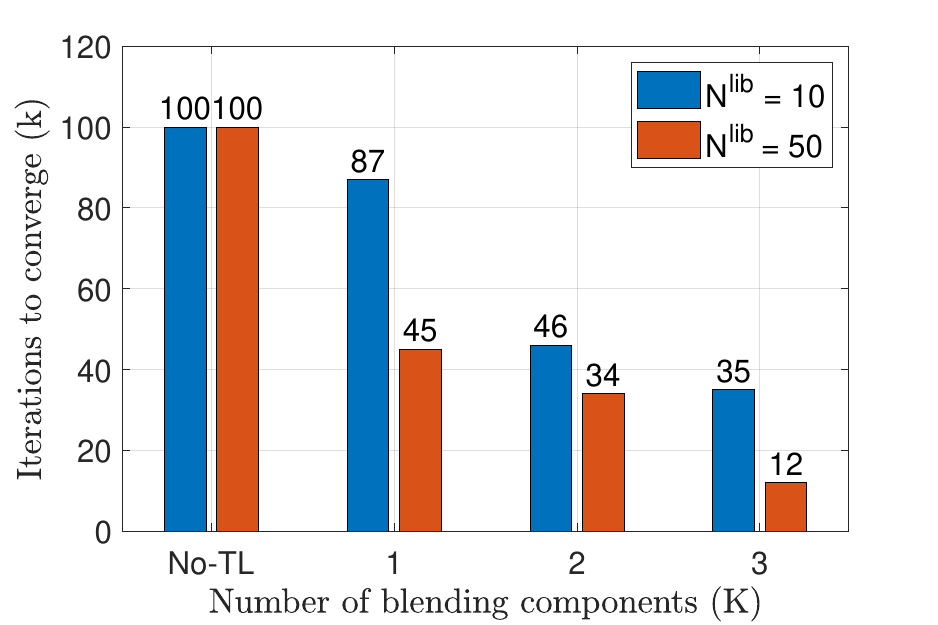} \\
		\caption{Number of iterations required for the convergence of the policy blending algorithm for No-TL as well as the $K$-blending component scenario for $K\in\{1,2,3\}$ and $N^\mathrm{lib}\in\{10,50\}$. 
		} 
		\label{fig:policy blending}
\end{figure}

 \section{Conclusion}

In this paper, we have presented novel TL techniques for near-field SBF using ELPMs, which aim to reduce the training time and enhance the adaptability of smart CSI-independent solutions. We have proposed a subarray policy propagation technique that transfers knowledge between ELPM subarrays' agents by analyzing the phase distribution images. By transferring the knowledge, we managed to accelerate the training process by about 4 times compared to the case where no TL was employed. Furthermore, we introduced the QLL approach as a mechanism to strategically adjust the learning rate of different layers of the DNNs to act as a catalyst and acquire an extra 20\% increase in the training speed. Finally, we dealt with the problem of dynamic DFP management through DFP policy blending, by leveraging the policies relating to previously trained DFPs. 
This technique enables seamless adaptation to new focal points yielding up to an 8-fold reduction in training time. The application of other ML schemes to realize the SBF, as well as employing smart robust SBF mechanisms where the CSI is {\it partially} available remains for future works.


	\bibliographystyle{IEEEtran}
	\bibliography{Mybib}

\begin{biography}[{\includegraphics[width=1in,height
=1.25in,clip,keepaspectratio]{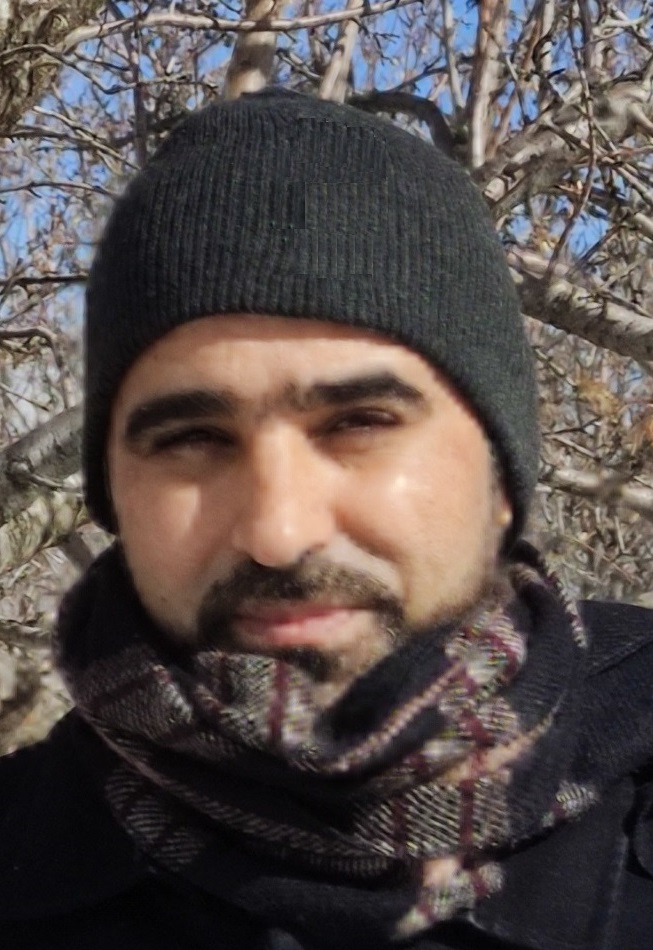}}]{Mohammad Amir Fallah}
		 received the BSc, MSc, and Ph.D. degrees from Shiraz University, Shiraz, Iran, and Tarbiat Modares University, Tehran, Iran, and Shiraz University, Shiraz, Iran, in 2001, 2003 and 2013 respectively, all in electrical and computer engineering. 
    He is an assistant professor with the Department of Engineering, Payame Noor University (PNU), Tehran, Iran, from 2015 till now. His current research interests include antenna and propagation, mobile computing, and the application of machine learning and artificial intelligence in wireless networks.
	\end{biography}

 \vspace{-10pt}
\begin{biography}[{\includegraphics[width=1in,height
=1.25in,clip,keepaspectratio]{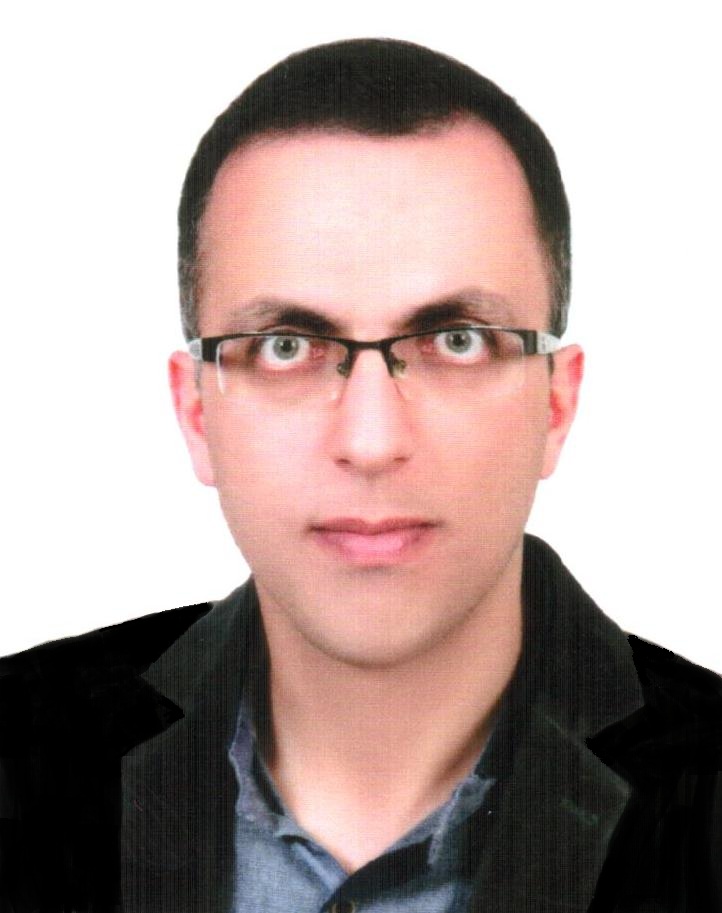}}]{Mehdi Monemi} (Member, IEEE)
		received the B.Sc., M.Sc., and Ph.D. degrees all in electrical and computer engineering from Shiraz University, Shiraz, Iran, and Tarbiat Modares University, Tehran, Iran, and Shiraz University, Shiraz, Iran in 2001, 2003 and 2014 respectively. After receiving his Ph.D., he worked as a project manager in several companies and was an assistant professor in the Department of Electrical Engineering, Salman Farsi University of Kazerun, Kazerun, Iran, from 2017 to May 2023. He was a visiting researcher in the Department of Electrical and Computer Engineering, York University, Toronto, Canada from June 2019 to September 2019. He is currently a Postdoc researcher with the Centre
for Wireless Communications (CWC), University of Oulu, Finland. His current research interests include resource allocation in 5G/6G networks, as well as the employment of machine learning algorithms in wireless networks.
	\end{biography}
 \vspace{-10pt}

\begin{biography}[{\includegraphics[width=1in,height=1.25in,clip,keepaspectratio]{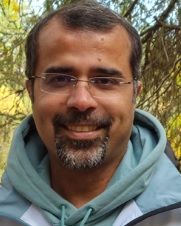}}]{Mehdi Rasti}
		(Senior Member, IEEE) received the
B.Sc. degree in electrical engineering from Shiraz University, Shiraz, Iran, in 2001, and the M.Sc. and Ph.D. degrees from Tarbiat Modares University, Tehran, Iran, in 2003 and 2009, respectively. He is currently an Associate Professor with the Centre
for Wireless Communications, University of Oulu, Finland. From 2012 to 2022, he was with the Department of Computer Engineering, Amirkabir University of Technology, Tehran. From February 2021 to January 2022, he was a Visiting Researcher with the Lappeenranta-Lahti University of Technology, Lappeenranta, Finland.
From November 2007 to November 2008, he was a Visiting Researcher with the Wireless@KTH, Royal Institute of Technology, Stockholm, Sweden. From September 2010 to July 2012, he was with the Shiraz University of
Technology, Shiraz. From June 2013 to August 2013, and from July 2014 to
In August 2014, he was a Visiting Researcher with the Department of Electrical
and Computer Engineering, University of Manitoba, Winnipeg, MB, Canada.
His current research interests include radio resource allocation in IoT, Beyond
5G and 6G wireless networks.
	\end{biography}

 \begin{biography}[{\includegraphics[width=1in,height=1.25in,clip,keepaspectratio]{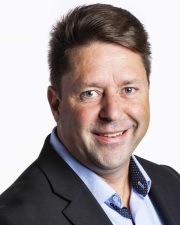}}]{Matti Latva-aho} (Fellow, IEEE) Matti Latva-aho (IEEE Fellow) received his M.Sc., Lic.Tech., and Dr.Tech. (Hons.) degrees in Electrical Engineering from the University of Oulu, Finland, in 1992, 1996, and 1998, respectively. From 1992 to 1993, he was a Research Engineer at Nokia Mobile Phones in Oulu, Finland, after which he joined the Centre for Wireless Communications (CWC) at the University of Oulu. Prof. Latva-aho served as Director of CWC from 1998 to 2006 and was Head of the Department of Communication Engineering until August 2014. He is currently a Professor of Wireless Communications at the University of Oulu and the Director of the National 6G Flagship Programme. He is also a Global Fellow at The University of Tokyo. Prof. Latva-aho has published over 500 conference and journal papers in the field of wireless communications. In 2015, he received the Nokia Foundation Award for his achievements in mobile communications research.
	\end{biography}

\end{document}